\documentclass[12pt]{article}
\usepackage[left=.9in,right=.9in,top=.75in,bottom=.75in]{geometry}
\usepackage{amsmath,amsfonts}
\usepackage{natbib}
\usepackage{bbm}
\usepackage[plain]{algorithm2e}
\usepackage{graphicx}
\usepackage{rotating}
\usepackage{mathrsfs}
\usepackage{amsthm}
\usepackage{xcolor} 
\usepackage{subfigure}


\newcommand{\bx}[0]{\mathbf x}
\newcommand{\bh}[0]{\mathbf h}
\newcommand{\by}[0]{\mathbf y}
\newcommand{\Bf}[0]{\mathbf w}
\newcommand{\bbeta}[0]{\boldsymbol{ \beta}}
\newcommand{\bdelta}[0]{\boldsymbol{ \delta}}

\newcommand{\ip}[0]{{i^\prime}}
\newcommand{\jp}[0]{{j^\prime}}

\newtheorem{theorem}{\indent\sc Theorem}
\newtheorem{lemma}{\indent\sc Lemma}
\newtheorem{proposition}{\indent\sc Proposition}

\newtheorem{remark}{\indent\sc Remark}


\newcommand{\blind}{1}

\usepackage[nomarkers]{endfloat}

\begin{document}
\def\spacingset#1{\renewcommand{\baselinestretch}%
{#1}\small\normalsize} \spacingset{1}

\title{A Bayesian Survival Tree Partition Model Using Latent Gaussian Processes}

\if1\blind
{
  \author{Richard D.\ Payne\footnote{Equal contribution to this work.} , Nilabja Guha\footnotemark[\value{footnote}] , Bani K.\ Mallick}
} \fi

\if0\blind
{
  \bigskip
  \bigskip
  \bigskip
  \medskip
} \fi

\maketitle

\begin{abstract}
    Survival models are used to analyze time-to-event data in a variety of disciplines.  Proportional hazard models provide interpretable parameter estimates, but proportional hazards assumptions are not always appropriate.  Non-parametric models are more flexible but often lack a clear inferential framework.  We propose a Bayesian tree partition model which is both flexible and inferential.  Inference is obtained through the posterior tree structure and flexibility is preserved by modeling the the hazard function in each partition using a latent exponentiated Gaussian process.  An efficient reversible jump Markov chain Monte Carlo algorithm is accomplished by marginalizing the parameters in each partition element via a Laplace approximation.  Consistency properties for the estimator are established. The method can be used to help determine subgroups as well as prognostic and/or predictive biomarkers in time-to-event data.  The method is applied to a liver survival dataset and is compared with some existing methods on simulated data.
\end{abstract}

\noindent
{\it Keywords:}  Survival analysis, Tree Partitions, Hazards Models,  Laplace approximation, reversible jump MCMC, time-to-event data
\vfill

\spacingset{1.9} 

\section{Introduction}
Survival analysis for time-to-event data is used in a variety of disciplines, particularly in medical research. The Cox proportional hazard, accelerated failure time, and proportional odds models and their extensions are popular modeling tools to relate survival times with predictors or covariates in the presence of censorship. 
Bayesian analysis of survival data is popular due to its flexibility to include prior information, suitability to develop complex models, and ability to construct efficient algorithms for full scale uncertainty quantification \citep{ibrahim2014b,sinha1997semiparametric}. Bayesian semi-parametric regression models for survival data have been developed using the Gamma process, the Dirichlet process, and Polya tree priors and their mixtures \citep{kalbfleisch1978non,christensen1988modelling,walker1999bayesian,hanson2002modeling,kuo1997bayesian,gelfand1995bayesian,gelfand2003bayesian}. Though these are popular models, they are constrained by parametric assumptions such as proportional hazards, constant odds, linearity, etc. Recently, several flexible Bayesian models have been developed  to accommodate data complexities using dependent processes and its extensions \citep{de2009bayesian,jara2011class,nieto2014bayesian,nipoti2018bayesian,riva2021survival}. 

An alternative approach is to utilize survival trees and forest-based models which are popular non-parametric alternatives to semi-parametric models and have the ability to identify non-linear interactions among variables of interest. Another benefit of tree models is their easy interpretation. Moreover, they can cluster subjects automatically according to their survival behavior based on covariates to identify prognostic groups.  Tree-based survival analysis was introduced in the 1980s \citep{leblanc1993survival,bou2011review} primarily through the extension of classification and regression tree (CART) models \citep{breiman1984classification} to censored survival data \citep{ciampi1981approach,gordon1985tree}. The major difference among these papers is the use of different splitting criteria such as log-rank statistics, likelihood ratio statistics, Wilcoxon-Gehan statistics, node deviance measures, and different impurity measures. A crucial issue of these tree models is the selection of a single tree of suitable size using forward or backward algorithms. 

In contrast to ordinary learning approaches which  construct one learner from training data, ensemble methods construct a set of learners known as base learners and combine them for prediction. Base learners are usually generated from the training data by a standard algorithm like decision tree, neural networks or other kinds of learning algorithms. There are two major kinds of ensemble methods based on the method of construction of the base learners. The first type is sequential ensemble methods where the base learners are generated sequentially and  the dependence among these base learners are used efficiently. The major example of this type  is the popular AdaBoost \citep{freund1997decision}. 
The alternative type of ensemble learning  constructs the base learners in a parallel way with bagging \citep{breiman1996bagging}  as an example. Recently, ensemble trees have become a popular alternative as a predictive survival model building tool \citep{ishwaran2004relative,hothorn2005survival,hothorn2006unbiased,ishwaran2008random,zhou2015rotation} using the concepts of bagging and random forests \citep{breiman1996bagging,breiman2001random}.


  The Bayesian  classification and regression tree methods were developed by \citet{chipman1998bayesian,denison1998bayesian}. 
\citet{clarke2008bayesian} use empirical Bayes priors and ensembles of Weibull tree models to model survival distributions. 
The majority of Bayesian tree-based models in survival analysis are through ensembles of trees techniques (in a boosting way) using Bayesian additive regression trees (BART) \citep{chipman2010bart}. \citet{bonato2010bayesian} utilize BART to estimate latent parameters in a gamma process \citep{kalbfleisch1978non}, Weibull, and accelerated failure time models; \citet{henderson2017individualized} use BART in conjunction with a mean-constrained Dirichlet process mixture model to analyze heterogeneous treatment effects, which is reminiscent of \citet{kuo1997bayesian}; \citet{sparapani2016nonparametric} model survival times using a BART probit regression approach.

Even though machine learning algorithms usually focus on the predictive performance, the research in the field of interpretable machine learning \citep{molnar2020interpretable,doshi2017towards} has flourished in recent years due to its direct connection with artificial intelligence (AI) \citep{kuziemsky2019role}. For instance, AI is transforming the industry of healthcare rapidly and telehealth is one of its newest products. Telehealth utilizes digital information and communication technologies to access and manage healthcare services remotely. AI is booming in this industry of telemedicine where an efficient system for doctors is created to analyze, screen, and diagnose different disease conditions. However, a black-box machine-learning based predictive tool is not sufficient  and a more interpretable decision system is needed for this purpose. Hence, most of the ensemble methods (based on boosting or bagging logics) are not suitable in this framework.
 
To develop interpretable as well as flexible machine learning models for survival data,  we resort to a single tree model. One purpose of the proposed methodology is to provide a Bayesian tree framework which is simple to interpret (i.e.\ obtaining one tree) and flexible, i.e., moves beyond the proportional hazards model \citep{cox1972regression}. The  hazard function at each terminal node of the tree is modeled flexibly using an exponentiated Gaussian process. The posterior space of trees is efficiently searched via Markov chain Monte Carlo (MCMC) by obtaining the marginal likelihood of the observed data via a Laplace approximation \citep{tierney1986accurate}. The combination of interpretability coupled with flexibility makes this model well-suited for many problems, including subgroup identification and identifying prognostic and predictive biomarkers in medical research. In this Bayesian tree framework we sample from the posterior distribution of trees which provides a sample of multiple trees for decision making. 

Most existing Bayesian tree survival models have transformed the survival problem into a regression or classification framework. For example,  \citet{bonato2010bayesian} use latent variables and parametric models. Tree ensembles are used primarily on the covariate structure in hierarchical specifications through a regression structure. \citet{sparapani2016nonparametric} transform survival responses into binary outcomes and use BART in this classification setup.  In contrast, we model the survival responses directly using  different hazard functions at each terminal node, and  construct the partition of the covariate space using a single tree structure similar to that of the original Bayesian classification and regression tree (CART) papers \citep{chipman1998bayesian,denison1998bayesian}.   However, our proposed model is different than existing CART models where simple parametric models (like Gaussian or multinomial models) with finite numbers of parameters and conjugate priors have been used at the terminal nodes of the tree. In contrast, we allow an unknown hazard function to model the  survival data within each terminal node of the tree.  An exponentiated Gaussian process is used as a prior for this hazard function which is  flexible but creates a non-conjugate posterior distribution. That way, explicit marginalization is not possible to obtain the marginal likelihood function which is the key step of the existing Bayesian Tree search algorithms for CART as well as BART  models. We propose a Laplace-approximation based method to develop a novel tree search algorithm for non-conjugate Bayesian tree models. In general, this new approach will aid the Bayesian tree models to be applied in more complex modeling scenarios.  Moreover, an important contribution of this paper is to   establish posterior consistency properties for our estimator under the tree partition model.

The paper is organized as follows:  Section 2 describes the model, Section 3 shows the theoretical properties of the model, Section 4 discusses the MCMC algorithm, Section 5 shows some simulated examples and an application to a liver survival dataset, Section 6 concludes.
%
%
\section{Survival models}
Let $t$ be a continuous nonnegative random variable representing the survival time of the subjects with the density function $f(t)$, the survival function $S(t)$ and the hazard function $h(t)= f(t) / S(t)$. In general, the hazard function depends on the survival time and the vector of covariates $\bx=(x_{1},x_{2},\ldots,x_{p})$. The Cox proportional hazards model \citep{cox1972regression} allow time and covariates to affect the hazard function separately, $h(t \mid \bx)=h_{0}(t){\rm exp}(\bx\bbeta)$, where $h_{0}$ is the baseline hazard and $\bbeta$ is a vector of regression parameters. This model has a strong assumption that the ratio of hazards for two individuals depends on the difference between their linear predictors at any time. 

We extend this class of models by developing the Treed Hazards Model (THM) which allows a nonlinear effect of survival time and covariates in a flexible and adaptive way. In the THM model we split the covariate space of $\bx$ with tree partitions and a fit piecewise constant hazard model within each partition element. Let's denote the tree as $T$ with $M$ terminal nodes and the vector of  hazards within tree partition elements (terminal nodes) as $\bh=(h_{1}(\cdot),h_{2}(\cdot),\cdots,h_{M}(\cdot))$ where $h_{i}(\cdot)$ is the hazard
function at the $i$th terminal node.  This is in contrast to specifying a finite number of parameters at each terminal node as in existing Bayesian CART models. The THM model describes the conditional distribution of $t$ given $\bx$ in this tree structure. If $t$ lies in the region corresponding to the $i$th terminal node then $t \mid \bx$ has the distribution $f(t\mid h_{i}(\cdot))$. In the next section we develop the hierarchical tree model for the survival data.

\subsection{The Bayesian hierarchical treed hazards model }
Let $\by = \{y_1,\ldots,y_n\}$ be the survival time for $n$ subjects (realizations of $t$),  $\bdelta = \{\delta_1, \ldots, \delta_n\}$ be the corresponding censoring indicators  where $\delta_j$ is 1 if $y_j$ is observed and 0 if $y_j$ is right censored. Then, we can write the posterior distribution of the  model unknowns $T$ and $\bh$ as $p(T,\bh \mid \by,\bx,\bdelta)\propto p(\by, \bdelta \mid T,\bh)p(\bh \mid T)p(T \mid \bx)$.  In the following sections, we specify the likelihood function $p(\by, \bdelta \mid T,\bh)$ and the prior distributions $p(\bh \mid T)$, $p(T \mid \bx)$ which will be used to obtain the posterior distribution.

\subsubsection{Tree partition and the prior distribution $p(T \mid \bx)$}
First, we describe the tree partition $T$ on the covariate space and the related prior distribution. We assume that the data can be partitioned into $M$ partition elements $\by_1,\ldots,\by_M$ such that $\bigcup_{i=1}^M \by_i = \by$ and $\by_i \cap \by_{i^\prime} = \emptyset$ for all $i \neq i^\prime$.  We further assume that the data within the $i$th partition element are conditionally independent given the hazard function, $h_i(t)$, and assume that the $M$ hazard functions are mutually independent.  The tree partition is constructed through a series of recursive binary splits of the covariate space \citet{chipman1998bayesian}. The binary tree $T$ subdivides the predictor space marginally by choosing one of the $p$ covariates from $\bx$ and then defining a partition rule.

We adopt the same prior as \citet{chipman1998bayesian} which is constructed by splitting the tree nodes with a designated probability and assigning a rule each time a node is split.  Let $S_\ip$ equal 1 if the $\ip$th node is split, otherwise 0, $\ip=1,\ldots,2M-1$ where $M$ is the number of terminal nodes on a tree.  Let $R_\jp = (v_\jp,r_\jp)$, $\jp=1,\ldots,M - 1$ represent the rules of the $M -1$ internal nodes.  Here, $v_\jp \in \{1,\ldots,p\}$ indicates which variable is to be split, and $r_\jp$ is the splitting rule on the $v_\jp$th variable.  The partition rule for continuous variables is formed by choosing a threshold and then grouping observations based on whether or not their covariate value is less than or equal to, or greater than, the splitting threshold.  For categorical variables, the split rule is defined as a subset of the unique categories for that variable, and observations are grouped based on whether their covariate value is in the subset or not.

Let $T = (S_1,\ldots,S_{2M -1},R_1,\ldots,R_{M-1})$ be the collection of split indicators and rules for a particular tree. Then the prior for a tree, $T$, can be evaluated as
\begin{eqnarray} \label{eq:prior}
\pi(T) = \prod_{\ip=1}^{2M-1} \pi(S_\ip) \prod_{\jp=1}^{M-1}\pi(R_\jp).
\end{eqnarray}

Following \citet{chipman1998bayesian}, we define $\pi(S_\ip = 1) = 1 - Pr(S_\ip = 0) = \gamma(1 + d_i)^{-\theta}$ where $d_\ip$ represents the depth of the $\ip$th node (i.e.\ the root node has a depth of 0, child nodes of the root node have depth 1, etc.).  Here, $\gamma$ and $\theta$ are hyperparameters chosen by the modeller and influence the prior for the general size and shape of the tree. Here, $\pi(R_\jp) = 1/(m_\jp m^\star_\jp)$ where $m_\jp$ represents the number of variables on the corresponding node that are available for splitting and $m^\star_\jp$ is the number of splitting rules available on the selected variable for the node to which the rule belongs.  The splitting rules at a given node are selected from the observed data which reach that node from its ancestry.  

\subsubsection{The Likelihood function $p(\by, \bdelta \mid T,\bh)$}
 The partition model assumes that given a tree, $T$, the hazard functions at each terminal node are independent and the data within a terminal node, given the hazard function at the node, are independent and identically distributed.  It is important to note that covariates are used to construct the tree, $T$, but are not used in the hazard functions.  Consequently, the likelihood function, given a tree, can be expressed as a simple product of likelihoods
\begin{equation}
p(\by, \bdelta \mid T, \bh) = \prod_{i=1}^{M} p(\by_i, \bdelta_i \mid h_i(\cdot)) \nonumber
\end{equation}
where $\by_i,\ \bdelta_i$ represent the $n_i$ responses and censoring indicators which belong to the $i$th terminal node, $p(\by_i, \bdelta_i \mid h_i(\cdot))$ represents the likelihood of $\by_i,\ \bdelta_i$ given the hazard function in the $i$th terminal node and $M$ is the total number of terminal nodes.

We write the density of an uncensored observation at the $i$th terminal node as $p_i(t)=h_i(t)e^{-\int_0^th_i(s)ds}$ for hazard function $h_i(\cdot)$. Let $p_c(\cdot)$ be the density of the censoring distribution, which has the same support as the true  densities $p_{0,i}(\cdot),\ i = 1, \ldots, m^*$ corresponding to true partition of the covariate space.

We place a Gaussian process prior on the log-hazard rate function on $[0,T_n]$, where $T_n=\max\{y_i\}$, $\log h_{i}(\cdot)=\beta_i+w_i(\cdot)$ and $w_i(t)$ follows Gaussian process with covariance  function $\tau_i^2K_{l_i}(s,t)=\tau_i^2\exp\{-|s-t|/l_i\}$, and let $\pi(\tau_i^2,l_i)=\pi(\tau_i^2)\pi(l_i)$ be the prior on the hyperparameters, and $\pi(\beta_i)$ be the prior on $\beta_i$.

\subsubsection{The hazards model $p(\bh \mid T)$ and the approximate likelihood}

In order to run a tractable Markov chain, a couple of alterations to the above model are used in practice.  First, an approximate likelihood and prior are used  where the domain is decomposed into small-width bins. Also, we normalize the domain between $[0,1]$ and use a mean zero Gaussian process under the proposed covariance kernel. 

The approximation assumes the hazard functions are piecewise constant within each bin of the form $h_i(t) = \lambda_{ik}$ for $t \in (s_{k-1}, s_k]$ with $z_k$ denoting the midpoint of the interval.  In all, there are $K$ bins with endpoints $(s_{k-1},s_k]$ for $k=1,\ldots,K$ where $s_0 =0$ and $s_K = 1$.  For simplicity, we use the same grid for each hazard function and assume that the bins all have equal width, inducing a regular grid for $(z_1,\ldots,z_K)$.   The likelihood for data at a single terminal node under these assumptions can be expressed as 
\begin{eqnarray*}
p(\by_i, \bdelta_i \mid \lambda_{i1},\ldots,\lambda_{iK}) &=& \prod_{j=1}^{n_i} \lambda_{ik_{ij}}^{\delta_{ij}} \exp\left\{ -\left[(y_{ij} - s_{k_{ij} - 1}) \lambda_{ik_{ij}} + I(k_{ij} > 1)\sum_{g=1}^{k_{ij} - 1} (s_g - s_{g-1}) \lambda_{ig}\right]\right\} \\
&=& \exp\left\{ \sum_{j=1}^{n_i} \delta_{ij} \log(\lambda_{ik_j}) - \left[\sum_{j=1}^{n_i} (y_{ij} - s_{k_{ij}-1})\lambda_{k_{ij}} + \sum_{k=1}^K (s_k - s_{k-1}) \lambda_{ik} m_{ik} \right] \right\} \\
&=& \exp\left\{ \sum_{k=1}^K \left[ n_{ik} \log(\lambda_{ik}) -  \lambda_{ik}\left( \sum_{j: k_{ij} = k} (y_{ij} - s_{k - 1}) + m_{ik}(s_k - s_{k-1}) \right)\right] \right\}
\end{eqnarray*}
where $y_{ij}$ is the $j$th response from the $i$th terminal node, $k_{ij}$ denotes the interval $y_{ij}$ belongs to, $m_{ik}$ is the number of observations of $\by_i$ which are greater than $s_k$, $n_{ik}$ is the number of uncensored observations of $\by_i$ which fall in interval $k$, and $\delta_{ij}$ is the censoring indicator for $y_{ij}$.

We model the hazard function $h_i(t) = \lambda_{ik} =  \exp(\omega_i(z_k))$ for $t \in (s_{k-1},s_k]$ where $\omega_i(t)$ is a zero-mean Gaussian process with covariance function $C_i(t,t^\prime) = \tau_i^2 \exp\{-|t - t^\prime|/l_i \}$.  This covariance kernel is chosen primarily for its computational advantages; see the appendix.  Given a set of $K$ locations, $\Bf_i = (\omega_i(z_1),\ldots,\omega_i(z_K))^T$  follows a multivariate-normal distribution with mean 0 and covariance matrix $\Sigma_i$ with the $i^\prime,j^\prime$th element of $\Sigma_i$ being $C_i(z_{i^\prime},z_{j^\prime})$. The posterior distribution of $\Bf_i$, assuming the covariance function parameters are known, is 
\[
 p(\Bf_i \mid \by_i, \bdelta_i) \propto p(\by_i, \bdelta_i \mid \Bf_i) \pi(\Bf_i) = (2\pi)^{-K/2}|\Sigma_i|^{-.5} \exp\left\{g_0(\Bf_i) \right\}
\]
where 
\[
g_0(\Bf_i) = \sum_{k=1}^K \left[ n_{ik} w_{ik} -  \exp(w_{ik})\left( \sum_{j: k_{ij} = k} (y_{ij} - s_{k - 1}) + m_k(s_k - s_{k-1}) \right)\right] -\frac{1}{2} \Bf_i ^T \Sigma^{-1}_i  \Bf_i
\]
and $w_{ik}$ is the $k$th element of $\Bf_i$.  Note that $g_0(\Bf_i)$ implicitly depends on the data, bins, and covariance function parameters.  Due to the assumption of independence between partitions, the full joint posterior distribution of a tree, $T$, with $M$ terminal nodes and $\Bf_1,\ldots,\Bf_M$ can be expressed as 
\[
p(T,\Bf_1,\ldots,\Bf_M \mid \by, \bdelta) \propto \pi(T) \prod_{i=1}^M p(\by_i, \bdelta_i \mid \Bf_i) \pi(\Bf_i)
\]
where $\pi(T)$ is the prior on the tree as in \eqref{eq:prior}.

The goal in many partition models is to find a suitable partition structure.  In a Bayesian framework, this corresponds to searching the posterior of the tree, i.e.,
\begin{eqnarray}
p(T \mid \by) &=& \int \ldots \int p(T,\Bf_1,\ldots,\Bf_M \mid \by) d\Bf_1 \ldots d\Bf_M \nonumber \\
&\propto & \pi(T) \prod_{i=1}^M \int p(\by_i \mid \Bf_i) \pi(\Bf_i) d\Bf_i. \label{eq:marginal}
\end{eqnarray}
To do this, often a model is chosen in which all node-specific parameters can be integrated out analytically.  When the integrals have an analytic form, often an efficient reversible jump MCMC chain \citep{green1995reversible} can be constructed to search the posterior partition space. Unfortunately there is no analytical solution for the integrals in \eqref{eq:marginal} for the present model.  To overcome this issue, a Laplace approximation is used to approximate each integral. Laplace approximations have been used previously in various survival model settings. \citet{levine2014bayesian} utilize a Laplace approximation to search trees in a Bayesian frailty model and \citet{martino2011approximate} use integrated nested Laplace approximations via data augmentation to fit a Cox proportional hazards model to avoid MCMC. In our method, the Laplace approximation is similar to \citet{levine2014bayesian}, but we do not have random effects and we also optimize the covariance function hyperparameters via empirical Bayes.  \citet{payne2020conditional} utilize a Laplace approximation in a partition model framework for non-survival conditional density estimation using a Voronoi partition.

In usual form, the Laplace approximation approximates the integral with a normal distribution derived from a Taylor expansion of $\log[p(\by_i \mid \Bf_i) p(\Bf_i)]$.  The Laplace approximation has a closed form given by 
\begin{equation} \label{eq:laplace}
\int p(\by_i, \bdelta_i \mid \Bf_i) \pi(\Bf_i) d\Bf_i \approx |D_i + \Sigma_i^{-1}|^{\frac{1}{2}}|\Sigma_i|^{-\frac{1}{2}} \exp\left\{g_0(\hat{\Bf}_i)\right\}
\end{equation}
where $D_i$ is a diagonal matrix with $j^\prime$th element $\exp(\hat{w}_{ij^\prime})\left[\sum_{j : k_{ij} = j^\prime} (y_{ij} - s_{j^\prime-1}) + (s_{j^\prime} - s_{j^\prime-1})m_{j^\prime}\right]$ and $\hat{\Bf}_i = \arg\max_{\Bf_i} p(\by_i \mid \Bf_i) \pi(\Bf_i)$, which can be obtained using Newton's method.

Thus far we have assumed that the covariance function parameters are fixed and known.  In practice, we employ empirical Bayes to set these values in each terminal node.  Specifically, we choose $\tau_i$ and $l_i$ to maximize $p(\by_i \mid \Bf_i) \pi(\Bf_i \mid l_i, \tau_i) \pi(l_i) \pi(\tau_i)$ where $\pi(l_i)$, $\pi(\tau_i)$ are t-densities with one degree of freedom and $\pi(\tau_i) = t(\tau_i / 10; 1)$ where $t(\cdot; 1)$ is the density of a t-distribution with one degree of freedom.

At this point, we have a way to approximate $p(T \mid \by)$, up to a normalizing constant, for a single tree utilizing Laplace approximations.  We now need to determine a way to search the approximate posterior.  As is common in many tree models such as \citet{chipman1998bayesian} and \citet{gramacy}, we employ Markov chain Monte Carlo methods.  Due to the varying number of parameters in the prior of the tree structure, a reversible jump Markov chain Monte Carlo algorithm \citep{green1995reversible} will be outlined in Section 4.

\section{Consistency results}
In this section, we show the density estimation consistency  when the true data generating density follows a partition model induced by a tree. If the survival function for the true data generating density has a tree partition form, then first we show that as $n$, the number of observations goes to infinity, the posterior distribution concentrates on  any weak neighborhood of the true data generating density with probability one. Then under slightly stronger  conditions, we establish the $L^1$ consistency.

Let  $P_1^*,\dots,P_{m^*}^*$ be the true partition of the covariate space denoted as  $\mathscr{P}^*=\{P^*_1,\dots,P_{m^*}^*\}$, and  let $h_{0,i}(t)>0,\ t>0$ for $i=1,\dots,m^*$,  be the true hazard function for partition $P_i^*$ and the corresponding true uncensored data generating  density for the $i$th partition is given by  $p_{0,i}(t)=h_{0,i}(t)e^{-H_{0,i}(t)}$ where $H_{0,i}(t)=\int_0^th_{0,i}(s)ds$. As mentioned earlier, $p_c$ is the density of the censoring distribution. Let $F_c$ and and $F_{0,i}$ be corresponding cdfs and $1-F_{0,i}(t)=e^{-H_{0,i}(t)}$. For a generic partition $\mathscr{P}=\{P_1,\dots, P_{k'}\}$ of the covariate space and the partition densities $p_{i}(\cdot)$,  for observation $y,\ {\bf x}$, and the censoring indicator function $\delta$, the data generating density $p(\cdot)$  is given by
\begin{eqnarray}
p(y,\delta \mid {\bf x},\mathscr{P})=\sum_{i=1}^{k'}\left[p_{i}(y)\int_y^\infty p_c(s)ds{\bf I}_{\delta=1}+p_c(y)\int_y^\infty p_{i}(s)ds{\bf I}_{\delta=0}\right]{\bf I}_{{\bf x}\in P_i}.
\label{likelihood1}
\end{eqnarray}
We observe  $(y_1,\delta_1),\ldots,(y_n,\delta_n)$ at observed covariates ${\bf x}_1,\ldots,{\bf x}_n$, where ${\bf x}_j$ are i.i.d. for $j=1,\dots,n$. Let  $\#(T)$ denote the number of partitions/terminal nodes for a tree $T$. Denoting the generic prior distribution by $\pi(\cdot)$, for $T_n=\max_i\{y_i\}$,  from Section 2 we have the following prior: $P_1,\dots,P_{\#(T)}\sim\pi(T)$; $\log h_i(t)=\beta_i+w_i(t) \sim  GP(\beta_i,\mathscr{K})$;    $mathscr{K}(s,t)=\tau_i^2 K\left(\frac{s}{l_i},\frac{t}{l_i}\right),\ t\in [0,T_n];\ i=1,\cdots,\#(T)$;
$\pi(\tau^2_i,l_i)=\pi(\tau^2_i)\pi(l_i)$; $\beta_i\sim\pi(\beta_i)$.
\begin{theorem}
For any bounded continuous function $g$, and for $\epsilon>0$, let $U^{\epsilon}_g=\{p:|\int g p^*(\cdot)-\int g p(\cdot)|<\epsilon\}$, be a weak neighborhood of the true joint density $p^*(\cdot)$. Under the model in equation \ref{likelihood1}, prior in equation \ref{prior}, under assumptions [A1]--[A6] in supporting materials, and under the true data generating distribution $p^*$, $\Pi_n(U^{\epsilon}_g\mid \cdot)\rightarrow 1$ almost surely as $n\rightarrow \infty$, where $\Pi_n(\cdot)$ denotes the posterior distribution based on $n$ observations.
\label{thm1}
\end{theorem}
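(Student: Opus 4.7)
\textbf{Proof proposal for Theorem \ref{thm1}.} The plan is to apply Schwartz's theorem for posterior consistency. For weak consistency at $p^*$ it is enough to verify the Kullback--Leibler (KL) support condition: for every $\epsilon>0$,
\[
\Pi\!\left\{p:\,\mathrm{KL}(p^*,p)<\epsilon\right\}>0,
\]
since the testing condition is automatic for weak neighborhoods of the type $U^{\epsilon}_g$. Thus the whole argument reduces to constructing, for arbitrary $\epsilon>0$, a set of trees and hazard functions of positive prior probability whose induced joint density is KL-close to $p^*$.

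The first step is to show that the tree prior $\pi(T)$ puts positive mass on trees that reproduce the true partition $\mathscr{P}^*=\{P_1^*,\ldots,P_{m^*}^*\}$. Since $\pi(T)$ in \eqref{eq:prior} assigns strictly positive probability to every fixed binary tree topology with $m^*$ leaves and allows any splitting variable and any observed threshold, the event ``$T$ has the same topology as $T^*$ and its splits induce a partition refining $\mathscr{P}^*$'' has positive probability for all sufficiently large $n$ (using the assumption, presumably among [A1]--[A6], that the covariate distribution places mass near the true split values). On this event the joint density in \eqref{likelihood1} decouples into $m^*$ independent problems indexed by the partition cells, and the KL divergence $\mathrm{KL}(p^*,p)$ decomposes as a finite sum over cells, each of the form
\[
\int_0^\infty \Big\{p_{0,i}(t)[1-F_c(t)]\log\tfrac{p_{0,i}(t)}{p_i(t)}+p_c(t)[1-F_{0,i}(t)]\log\tfrac{1-F_{0,i}(t)}{1-F_i(t)}\Big\}\,dt,
\]
weighted by $\Pr(\mathbf{x}\in P_i^*)$. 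It suffices to control each summand by $\epsilon/m^*$.

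The second, nonparametric, step is to show that each Gaussian process factor puts positive mass on hazard functions $h_i$ approximating $h_{0,i}$. Under the reparametrization $\log h_i=\beta_i+w_i$ with $w_i$ a mean-zero OU process on $[0,T_n]$, I would invoke the standard support theorem for stationary GPs with strictly positive spectral density (van der Vaart--van Zanten): the reproducing kernel Hilbert space of the exponential covariance is dense in $C([0,T])$ for the sup-norm, so for any continuous target $\log h_{0,i}$ on a compact interval the event $\{\|\beta_i+w_i-\log h_{0,i}\|_\infty<\delta\}$ has positive prior probability, after integrating out the hyperpriors $\pi(\tau_i^2),\pi(l_i),\pi(\beta_i)$, which all have full support. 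A sup-norm bound $\delta$ on $\log h_i-\log h_{0,i}$ translates into a multiplicative bound $e^{-\delta}\le h_i/h_{0,i}\le e^{\delta}$; plugging into the per-cell KL expression above and using the boundedness/integrability conditions on $h_{0,i}$ and $p_c$ (presumably assumed in [A1]--[A6]) yields KL divergence $O(\delta)$, hence less than $\epsilon/m^*$ for $\delta$ small enough.

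Combining the two steps gives positive prior mass on an $\epsilon$-KL-neighborhood of $p^*$; Schwartz's theorem then yields $\Pi_n(U^{\epsilon}_g\mid\cdot)\to 1$ almost surely under $p^*$, proving the claim. The main obstacle I anticipate is the $n$-dependent time horizon $T_n=\max_j y_j$: the GP support theorem is a statement on fixed compact intervals, so one has to either fix a deterministic horizon $T$ with $\Pr_{p^*}(T_n>T)$ small and argue that the tail of the hazard beyond $T$ contributes $o(1)$ to the KL divergence (which is where assumptions controlling the tails of $p_{0,i}$ and $p_c$ become essential), or use a sieve-based truncation. A secondary, more technical nuisance is the joint handling of the covariate partition boundaries: small perturbations of the true thresholds produce cells with slightly incorrect hazards, and one must show that the resulting misclassification mass is of order $\epsilon$ by a continuity argument on the covariate density, again using the regularity hypotheses [A1]--[A6].
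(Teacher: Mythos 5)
Your overall architecture---exponentially consistent tests for weak neighborhoods plus prior mass near the truth, with the Kullback--Leibler divergence decomposed cell-by-cell over the tree partition and a Gaussian-process sup-norm approximation of each $\log h_{0,i}$---matches the paper's, and you correctly flag the data-dependent horizon $T_n=\max_j y_j$ as the main obstacle. The gap is that you do not resolve that obstacle, and the two fixes you sketch do not work as stated. The prior itself, not merely the support theorem, lives on $[0,T_n]$: the Gaussian process is placed on the random interval $[0,T_n]$ and the estimated hazard is frozen at $h_i(T_n)$ beyond it, so the prior changes with $n$ and with the data. Consequently the classical Schwartz condition $\Pi\{p:\mathrm{KL}(p^*,p)<\epsilon\}>0$ is not a statement about a single fixed prior, and the Fatou-lemma lower bound on the denominator $\int\prod_j p(y_j,\delta_j,{\bf x}_j)/p^*(y_j,\delta_j,{\bf x}_j)\,d\pi$---the step that converts positive prior mass into an $e^{-n\beta}$ lower bound---is unavailable; the paper states this explicitly just before equation \eqref{postcal1}. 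Fixing a deterministic horizon $T$ also fails because $T_n\to\infty$ almost surely under [A5], so no fixed $T$ captures the prior for all large $n$.

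What the paper does instead is quantitative. It shows $-\log\pi(U^{\epsilon_n})=o(n)$ for an $n$-dependent neighborhood $U^{\epsilon_n}$ of the truth (Proposition \ref{tree_support1} for the tree part, Lemma \ref{gp_support} for the GP part via small-ball probabilities on $[0,T_n]$ combined with Proposition \ref{max_sup}, which gives $T_n=o(n^\beta)$ almost surely), and then bounds the average log-likelihood ratio directly and uniformly over $U^{\epsilon_n}$ through the three-factor decomposition \eqref{decomp} in Lemma \ref{dc_prop}, including a separate strong-law argument for the contribution of the partition mismatch set $\mathscr{P}^*\Delta\hat{\mathscr{P}}^*$. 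This yields $e^{cn\epsilon/4}D_n>1$ for large $n$, which substitutes for Fatou. Your proposal would need this quantitative prior-mass-plus-uniform-likelihood-ratio machinery (or an equivalently detailed sieve construction) to close; as written, the central step of lower-bounding the denominator is asserted rather than proved.
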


Theorem \ref{thm1} establishes consistency in  weak topology with $\Pi_n(\mid \cdot)$ being weakly consistent at $\{ {p_{0,i}}_{i=1,\cdots,m^*},p_c,\mathscr{P}^* \}$. Even though  we state  Theorem \ref{thm1} under the support condition given in [A3], the consistency result will hold true for more  general support  under additional mild conditions, given in the following. We can now state the following strong consistency result. For the strong consistency we consider estimated densities that are supported on $[0,T_n]$ and show that they will remain in a small $L^1$ neighborhood of the true density with probability one. 

\begin{theorem}

Let $\epsilon>0$ and $U^{\epsilon}=\{p:\int | p^*(\cdot) -p(\cdot)|<\epsilon\}$, be a $L^1$ neighborhood of the true joint density $p^*(\cdot)$. Under the model in equation \ref{likelihood1} and related prior distributions, under assumptions [A2],[A4]--[A5],[B1]--[B5] in supporting materials and under the true data generating distribution $p^*$, $\Pi_n(U^{\epsilon}\mid \cdot)\rightarrow 1$ almost surely as $n\rightarrow \infty$, where $\Pi_n(\cdot)$ denotes the posterior distribution based on $n$ observations.
\label{thm2}
\end{theorem}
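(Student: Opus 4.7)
\medskip

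\noindent\textbf{Proof proposal for Theorem 2.} My plan is to strengthen the weak consistency of Theorem \ref{thm1} to $L^1$ consistency by combining the Kullback--Leibler support property used in that theorem with a sieve-plus-entropy argument of Ghosal--Ghosh--Ramamoorthi type. The standard decomposition writes
\[
\Pi_n(U^{\epsilon c}\mid \cdot)
= \frac{\int_{U^{\epsilon c}\cap \mathcal{F}_n}\prod_j \frac{p(y_j,\delta_j\mid x_j)}{p^*(y_j,\delta_j\mid x_j)}\,d\Pi(p)}{\int \prod_j \frac{p}{p^*}\,d\Pi(p)}
+ \frac{\Pi(\mathcal{F}_n^c)\cdot\text{(ratio)}}{\int \prod_j \frac{p}{p^*}\,d\Pi(p)},
\]
so the task reduces to (i) a lower bound on the denominator from KL support, (ii) a sieve $\mathcal{F}_n$ with exponentially small prior probability on its complement, and (iii) an exponentially consistent test for $U^{\epsilon c}\cap \mathcal{F}_n$ versus $p^*$, obtained via a bracketing entropy bound.

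\medskip

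\noindent\textbf{Step 1 (KL support).} The weak consistency argument behind Theorem \ref{thm1} already yields prior positivity on $\{p : KL(p^*,p) < \eta\}$ for every $\eta>0$; this rests on the positive prior mass on $\mathscr{P}^*$ (assumption [A2]), the GP's ability to approximate the log hazards $\log h_{0,i}$ uniformly on $[0,T_n]$, and continuity of the censored-data KL functional in the uncensored hazards. I will reuse this to conclude that for any $c>0$, almost surely $e^{cn}\int \prod_j (p/p^*)\,d\Pi(p)\to\infty$, by Barron's argument.

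\medskip

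\noindent\textbf{Step 2 (construction of the sieve).} Let $\mathcal{F}_n$ consist of densities $p$ of the form \eqref{likelihood1} with (a) number of terminal nodes $\#(T)\le M_n$, (b) in every node, $\sup_{t\in[0,T_n]}|w_i(t)|\le B_n$ and $|\beta_i|\le B_n$, and (c) hyperparameters $\tau_i^2\in[\tau_{\min,n}^2,\tau_{\max,n}^2]$ and $l_i\in[l_{\min,n},l_{\max,n}]$, where $M_n$, $B_n$ grow polylogarithmically in $n$ and the hyperparameter windows are chosen per [B3]--[B4]. For $\Pi(\mathcal{F}_n^c)$ I will control each failure mode separately: the tree-depth prior in \eqref{eq:prior} gives geometric decay in $\#(T)$, Borell--TIS applied conditional on $(\tau_i,l_i)$ gives $\Pr(\sup|w_i|>B_n)\le 2\exp(-cB_n^2/\tau_{\max,n}^2)$, and the $t_1$ tails on $\tau_i,l_i$ handle the hyperparameter events. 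Choosing $B_n,M_n,\tau_{\max,n}$ at the prescribed rates gives $\Pi(\mathcal{F}_n^c)\le e^{-\beta n}$ for some $\beta>0$.

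\medskip

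\noindent\textbf{Step 3 (entropy bound and testing).} On $\mathcal{F}_n$, the log-hazards are uniformly bounded, so by the Lipschitz continuity of $h\mapsto h e^{-H}$ and of $h\mapsto 1-F(\cdot)$ on this class, an $L^1$ bracketing for $p$ reduces to a sup-norm cover of $\{w_i\}$ in each of at most $M_n$ nodes, together with the combinatorial choice of $T$ and the split rules. The exponential-kernel GP on $[0,T_n]$ restricted to a sample-path sup-norm ball has metric entropy polynomial in $1/\epsilon$ (via RKHS-ball covering numbers and truncation on $[0,T_n]$), and the number of trees with $\le M_n$ nodes and discrete splits from $n$ observations is bounded by $n^{c M_n}$. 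Hence $\log N_{[]}(\epsilon,\mathcal{F}_n,L^1)=o(n)$, which via Hoeffding-type tests produces uniformly exponentially consistent tests of $p^*$ against $U^{\epsilon c}\cap \mathcal{F}_n$; combining with Steps 1--2 closes the argument in the usual way.

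\medskip

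\noindent\textbf{Main obstacle.} The delicate part will be the entropy calculation, because the censored-data density \eqref{likelihood1} depends on the hazard through both $h$ and $H=\int h$, so an $L^1$ bracket for $p$ requires a simultaneous sup-norm bracket for $w_i$ \emph{and} for its integral over the random interval $[0,T_n]$. Controlling this uniformly while letting $T_n=\max_i y_i$ grow (handled via assumption [B5] on the tail of $p^*$) and simultaneously tracking the hyperparameter window $(\tau_i,l_i)$ will be where I expect to spend the most effort; the tree counting, KL support, and sieve-complement steps are by comparison routine.
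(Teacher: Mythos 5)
Your overall architecture (KL support, sieve, entropy, tests) matches the skeleton of the paper's argument, but two of your three steps contain gaps that the paper has to work hard to avoid, and the fix changes the quantitative structure of the whole proof.

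First, the sieve in your Step 2 is defined by the sample-path bound $\sup_{t\in[0,T_n]}|w_i(t)|\le B_n$. That set has prior mass close to one, but it is essentially a sup-norm ball in $C[0,T_n]$ and has no useful metric entropy bound, so Step 3 cannot be run on it; the ``metric entropy polynomial in $1/\epsilon$'' you invoke holds for the RKHS unit ball, not for a sup-norm ball of sample paths. The paper instead uses the van der Vaart--van Zanten sieve $K_{n,i}=M_nH_{1,i}^n+\epsilon_2 B_{1,i}^n$ (RKHS ball of radius $M_n$ plus a small sup-norm ball), for which Borell's inequality gives $-\log\Pi(K_{n,i}^c)\succeq M_n^2$ and Theorem 2.1 of van der Vaart and van Zanten controls the covering number by roughly $M_n^2+n^{\tilde\delta}$. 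This creates a tension you do not address: to get your claimed $\Pi(\mathcal{F}_n^c)\le e^{-\beta n}$ you would need $M_n^2\gtrsim n$, but condition [B2] forces $D_n^{1+\delta'}M_n^2\log m_n=o(n)$, i.e.\ $M_n^2=o(n)$, precisely so the total entropy stays $o(n)$. Consequently the sieve complement only has prior mass $e^{-cM_n^2}$ with $M_n^2=o(n)$, which is \emph{not} exponentially small in $n$, and your Step 1 denominator bound ($e^{cn}D_n\to\infty$ for every $c>0$) is then useless for killing the $\Pi(\mathcal{F}_n^c)$ term. The paper's resolution is the extra layer missing from your proposal: it bounds the denominator below by $e^{-\epsilon'' M_n^2}$ using a \emph{shrinking} neighborhood $\tilde U^{\xi_n}$ with $\xi_n\sim M_n^2/n$, and must separately prove (Lemma \ref{lem2thm2}) that $-\log\pi(\tilde U^{\xi_n})=o(M_n^2)$ --- a much sharper prior-mass statement than fixed-radius KL support.

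Second, your Step 1 treats the denominator bound as a routine application of Barron's argument, but the paper explicitly notes that the standard Fatou/Schwartz route fails here because the Gaussian process is indexed by $[0,T_n]$ with $T_n=\max_i y_i$ random and growing; the lower bound has to be obtained by directly controlling $n^{-1}\log D_n^{(1)},D_n^{(2)},D_n^{(3)}$ on the constructed neighborhood (Lemma \ref{lemthm2}), together with Proposition \ref{propthm2} to handle the mass of observations falling in $\mathscr{P}^*\Delta\hat{\mathscr{P}}^*$. Your identification of the $h$-versus-$H=\int h$ bracketing issue as the main obstacle is reasonable, but the harder and more structural obstacle is the one above: reconciling the sieve-complement rate $e^{-cM_n^2}$ with a denominator bound at the matching scale.
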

Technical details and related proofs of the results are given in  Appendix ii.

\section{MCMC}
Since the dimension of the tree parameter space is variable, we use a reversible jump MCMC algorithm to search the posterior parameter space  \citep{green1995reversible}. The algorithm is inspired by those of \citet{chipman1998bayesian}  and \citet{gramacy} with a few modifications. The primary goal in constructing the algorithm is to obtain draws from $\tilde{p}(T \mid \by)$, the approximate posterior obtained using the Laplace approximations.  This is accomplished in the reversible jump MCMC algorithm by modifying an existing tree with one of four possible moves: grow, prune, change, and swap.

\begin{itemize}
\item {\bf Grow:} Randomly select a terminal node and add two child nodes based on a split rule.  The split rule is selected by first randomly selecting a variable and then randomly selecting an available rule of that variable (i.e.\ the rule is drawn from the prior).  This increases the number of terminal nodes of the tree by one, thereby ``growing" the tree.

\item {\bf Prune:} The opposite of a grow step.  Randomly select a node whose children are both terminal nodes and delete the split rule and the children nodes.  This reduces the number of terminal nodes of the tree by one, thereby ``pruning" the tree.

\item {\bf Change:} Randomly select an interior node and change the existing split rule.  If the current rule is based on a categorical variable, a new rule is drawn randomly from the prior by first randomly selecting a variable, and then randomly selecting a rule from that variable.  If the current rule is based on a numeric variable, there are two possible changes that can be made with pre-specified probability:  1) Move the current rule to the next largest or smallest rule available for the current variable, 2) Draw a new rule from the prior.

\item {\bf Swap:} Randomly choose a parent and child pair which are both internal nodes and swap their rules.  If, however, the parent and child's rules are based on the same continuous variable, a rotate step is employed \citep{gramacy}.  If both children of the parent node have the same rule, both children swap their rules with the parent node.
\end{itemize}



The MCMC algorithm proceeds as follows: 1. Initialize a starting tree, $T$. 2. Propose a new tree, $T^1$, from the current tree using a grow, prune, change, or swap step (as appropriate based on current tree). 3. Accept $T^1$ with probability $\alpha = \{1,\frac{q(T \mid T^1) \tilde{p}(T^1 \mid \by)}{q(T^1 \mid T) \tilde{p}(T \mid \by)}\}$
where $q(\cdot \mid \cdot)$ is the proposal distribution and $\tilde{p}(T \mid \by)$ is the approximate posterior of $T$ obtained from the Laplace approximation; otherwise keep the current tree, $T$. 4. Repeat steps 2-3 until the stationary distribution is reached and the desired number of samples is obtained.

In practice we replace the approximate posterior of the tree in $\alpha$ with the likelihood-prior product in \eqref{eq:marginal} since the normalizing constants cancel.

From their inception, Bayesian tree models have been notorious for their difficulty in mixing when using MCMC procedures.   Various methods have been proposed to overcome this issue including restarting the MCMC chain multiple times \citep{chipman1998bayesian}, parallel tempering \citep{gramacy2010categorical}, and multiset sampling \citep{leman2009}.  We implement parallel tempering \citep{geyer1991markov} modeled after \citet{gramacy2010categorical}.  Additional technical details regarding the MCMC, are provided in the appendix.


\section{Simulations \& Applications}
\subsection{Preliminaries}
For the following analyses, we use the same set of tuning parameters for the MCMC chain.  We set the probability of grow, prune, change, and swap steps to be .25.  For the change step for a continuous variable, we choose to move the rule to an adjacent rule with probability .75, otherwise a rule is drawn from the prior.  We run each analysis with 8 parallel chains with the minimum inverse-temperature of .1, and propose a trade between the parallel chains every 10 MCMC iterations. Trees are required to have a minimum of 25 observations at a terminal node.
\subsection{Tree partition with non-proportional hazards} \label{sec:tp}
Our first simulation demonstrates the flexibility and interpretability of the THM.  One thousand data points were simulated from the following model:

$X_1   \sim  \text{Uniform}(0,10)$; $X_2 \sim  \text{Multinomial}(\text{A,B,C,D}; .25,.25,.25,.25); $Y  \sim  \text{Weibull}(5,2)  \text{ if } X_1 > 5 \text{ and } X_2 \in \{\text{A,B}\}$;
$Y  \sim \text{Weibull}(1,5)  \text{ if } X_1 \leq 5 \text{ and } X_2 \in \{\text{A}\} $; Y \sim $\text{Weibull}(.5,.9)  \text{ if } X_1 \leq 5 \text{ and } X_2 \in \{\text{B}\}$;
$ Y \sim $\text{Weibull}(5,5)  \text{ if } X_1 \leq 3 \text{ and } X_2 \in \{\text{C,D}\}$;
$ Y\sim $\text{Weibull}(.5,.5) \text{ if } 3 \leq X_1 \leq 7 \text{ and } X_2 \in \{\text{C,D}\};$$
$Y\sim G  \text{ if } X_1 > 7 \text{ and } X_2 \in \{\text{C,D} \}$

%
where $G$ is a random variable drawn from an arbitrary piecewise linear survival function.  The data were also right censored with a 5\% censoring rate. The Markov chain Monte Carlo algorithm ran for $10^4$ iterations.  The tree with the highest posterior probability from the chain is shown in Figure \ref{fig:sim9tree} along with the posterior and true survival curves.  Note that since we are using observed data values to choose the rules at each node, the rules for numeric variables will not be exactly the same as those used to generate the tree.  This example highlights the strength of the THM, simultaneously producing easy interpretation and flexibility.  Interpretation is simple since only one tree is needed, but the the various shapes of the survival curves can still be modeled flexibly at each terminal node.


\begin{figure}
\begin{center}
  \subfigure[]{\includegraphics[trim=40 30 40 10, clip,width = .45\textwidth]{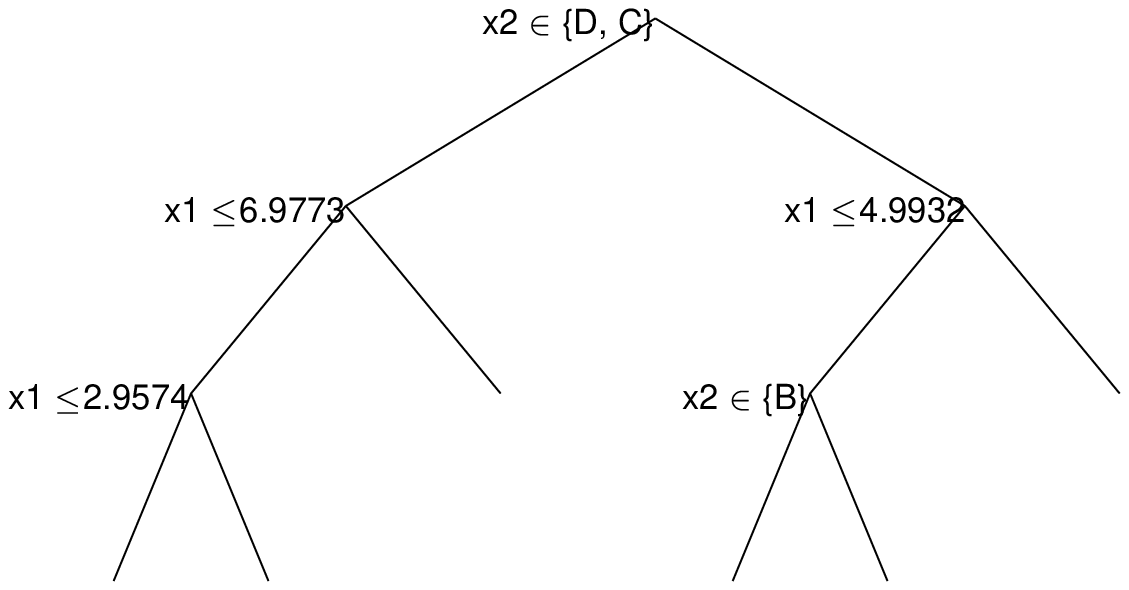}}
  \subfigure[]{\includegraphics[trim=0 10 0 10, clip,width = .45\textwidth]{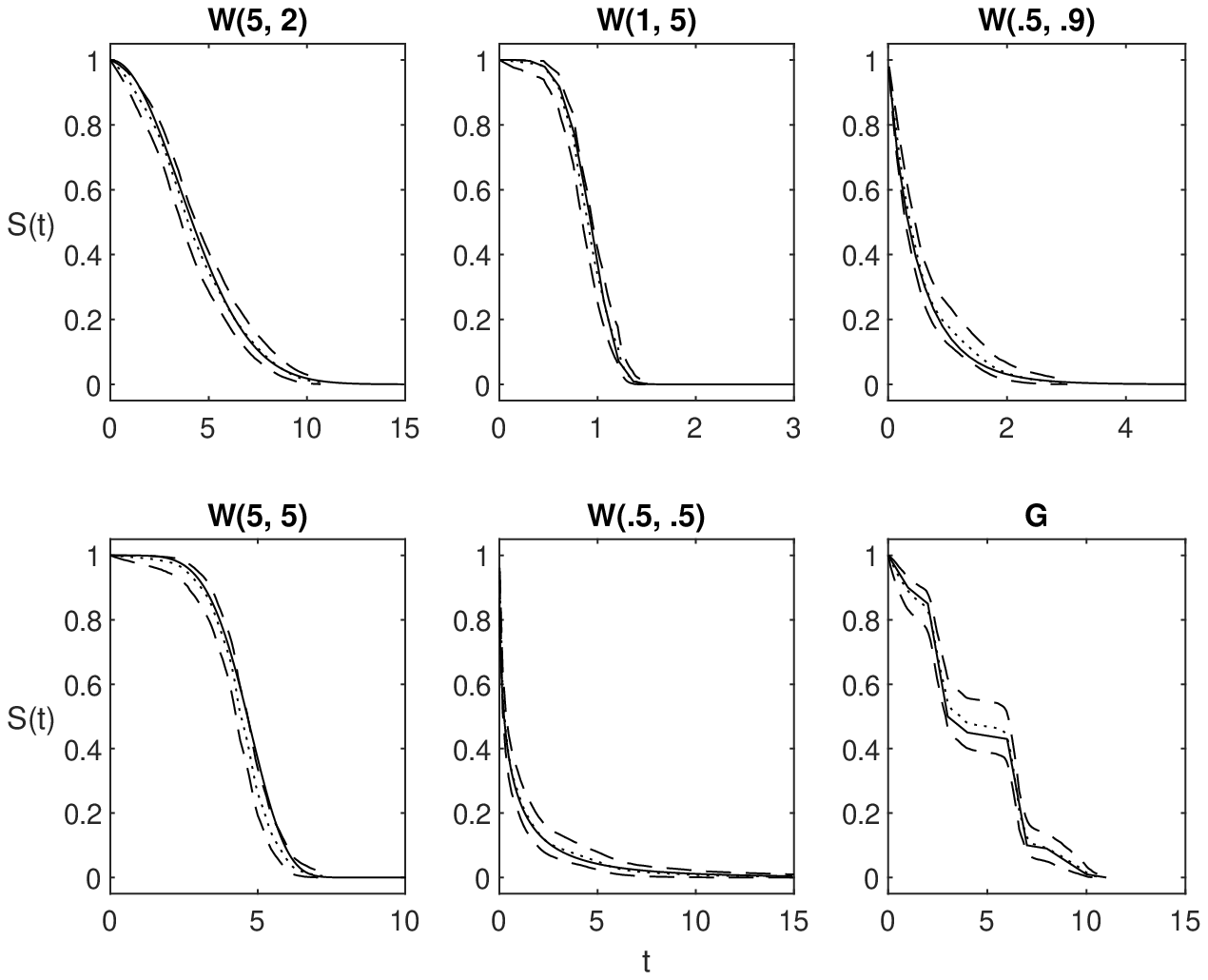}}
  \caption{(a) The tree with the highest posterior probability from the model in Section \ref{sec:tp}.  (b) The posterior mean (dotted), 95\% credible intervals (dashed), and the true survival curve (solid) in each partition element.}
\label{fig:sim9tree}
\end{center}
\end{figure}


\subsection{Cox proportional hazards} \label{sec:lph}
Although the present partition model assumes the data are generated from a tree structure, it still performs well when this assumption does not hold, e.g., when dealing with the classic linear proportional hazards model \citep{cox1972regression}.  One thousand observations were simulated from an exponential proportional hazards model with an 8\% censoring rate.  The model is $h(t \mid \mathbf{X} = \mathbf{x}) = \frac{1}{2} \exp\left\{\mathbf{x} \mathbf{\beta} \right\}$,       \mathbf{X} = (X_1,\ldots,X_6) $
$X_1,\ldots,X_5  \sim\text{Uniform(0,1)} 
,$X_6  \sim\ & \text{Bernoulli(.5)} $
$\beta =\ & (-1, 1, 2, 0, 0, -2)^T$ where $h(\cdot \mid \cdot)$ represents the hazard function. 

Note that we have two spurious covariates ($X_4,X_5$) which do not influence the hazard function.  The MCMC algorithm ran for $10^4$ iterations.  In the tree with the highest posterior probability (8 terminal nodes), neither of the two spurious covariates were included.  To give a measure of performance, Figure \ref{fig:sim7survival} plots the posterior survival curve and the exponential survival curve using the median true hazard value of the data that reach the terminal node.  This comparison serves as an indicator of where the survival distribution should lie in that region of the covariate space, and indicates the model captures the important features of the curves well.


\begin{figure}[t]
\begin{center}
\includegraphics[scale=.55]{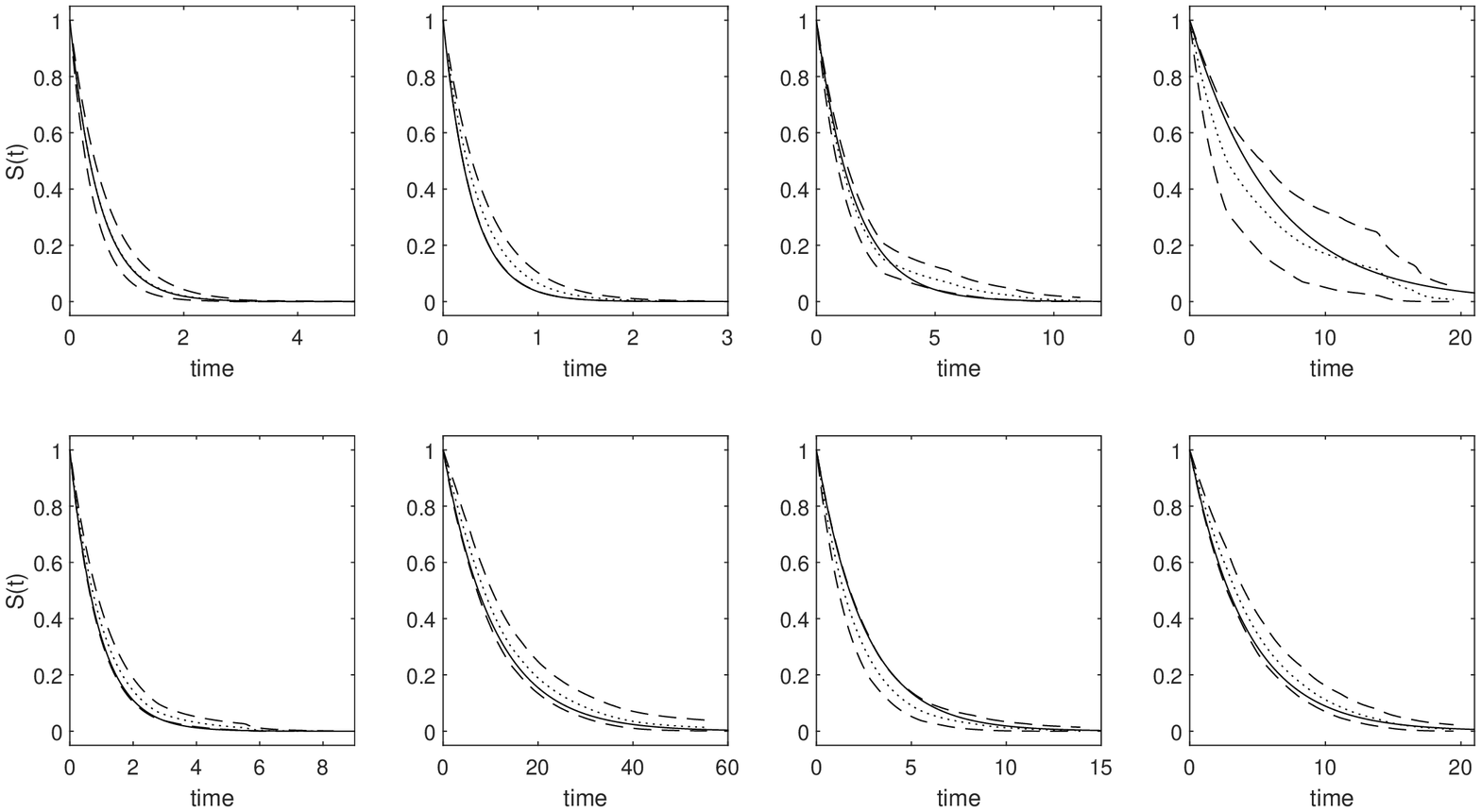}
\caption{The posterior mean (dotted) and 95\% credible intervals (dashed).  The survival curve corresponding to the median true hazard function of the data in each terminal node is also plotted (solid).}
\label{fig:sim7survival}
\end{center}
\end{figure}

\subsection{Prognostic \& predictive biomarker, comparison with BART, survival random forests, and Cox}
Identifying predictive and prognostic biomarker effects is important in drug development.  Sometimes there is a hypothesis that a certain sub-population, based on a predictive biomarker, will respond better to a treatment.   Other times the value of a prognostic biomarker is related to a patient's prognosis.  It is possible a biomarker may be both prognostic and predictive.

A single biomarker, $b$, and a treatment assignment indicator, $a$, are simulated for 1,000 subjects.  The biomarker is both prognostic and predictive, and there are two treatment arms. The time to event, $y$, is simulated from a Weibull model with random right-censoring in 7.8\% of subjects,   $a \sim Binomial(1, .5)$, $b \sim Uniform(0, 1)$
    $y \sim Weibull(1 + 2ab, 1 + 5b)$
where $Weibull(\lambda, k)$ is a Weibull distribution with scale $\lambda$ and shape $k$.  For small biomarker values $b$ nearer to 0, there is a small treatment effect, whereas for values of $b$ closer to 1, there is a large treatment effect creating a predictive biomarker.  For a fixed treatment assignment $a$, the shape of the survival curve is a function of $b$, creating a prognostic effect for the biomarker.

The tree partition model was fit to the simulated data, and the tree with the highest posterior probability is plotted in Figure \ref{fig:pred_prog}. The tree and the estimated survival curves at each terminal node capture the predictive and prognostic effects.  The split rule of the root node indicates that there is a difference in survival functions between the two treatment arms.  Descending either right or left from the root node, all remaining splits are on the biomarker, indicating a prognostic biomarker effect.  For two patients with similar biomarker values, but different treatment assignments, their survival curves can be quite different.  For instance, a patient with a biomarker value of .8 has a much better survival curve with treatment assignment 1 verses 0; this indicates that the biomarker is also predictive.

The simulated data were fit to three additional models: a frequentist Cox proportional-hazard model with a non-parametric baseline hazard an an interaction between $a$ and $b$; Bayesian additive regression trees (BART) \citep{sparapani2016nonparametric}; and survival random forests \citep{ishwaran2008random}.  The survival curves, the 95\% confidence/credible intervals, and the true treatment effect for a patient with $a = 1$ and $b = .10$ are plotted in Figure \ref{fig:pred_prog_comparison}; for the survival random forest method, the 95\% intervals are calculated from the predictions of all trees.  The changing shape of the survival curve is not captured well in the Cox model since the proportional hazard assumption is too restrictive.  This is problematic; even if the Cox model identifies a significant treatment by biomarker interaction, which is often used to determine a predictive biomarker effect, the estimates of the survival curve may be inaccurate.  On the other hand, BART and random forests estimate the survival curves well, but often custom post-processing of the results in a potentially large-volume covariate space are required to better understand which predictors are prognostic, predictive, or both.  The THM automatically provides both clear interpretation and flexible estimation of the survival curves. 
\begin{figure}[t]
    \centering
    \includegraphics[scale = .55]{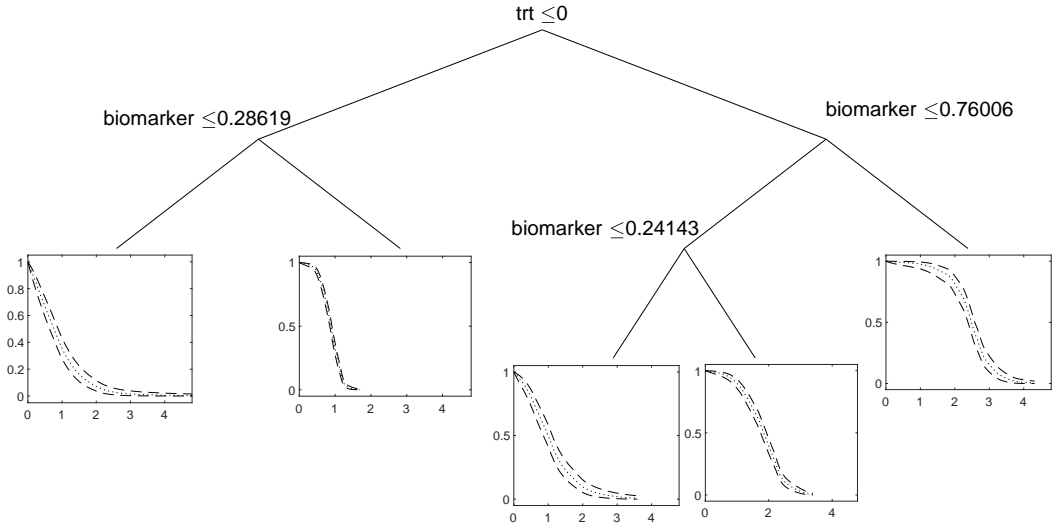}
    \caption{Tree with the highest posterior probability with the posterior mean (dotted) and 95\% credible intervals (dashed) at each terminal node.}
    \label{fig:pred_prog}
\end{figure}
\begin{figure}[t]
    \centering
    \includegraphics[scale = .55]{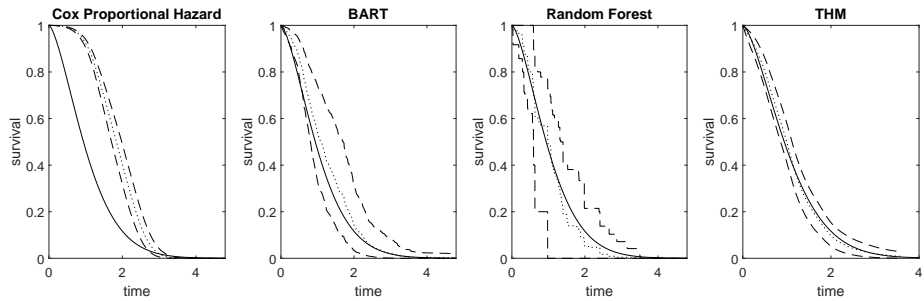}
    \caption{The true survival curve (solid) for a patient with $a = 1$ and $b = .10$, estimated mean (dotted) and 95\% confidence/credible intervals (dashed) for the frequentist Cox, BART, and THM.}
    \label{fig:pred_prog_comparison}
\end{figure}
\subsection{Analysis of Primary Biliary Cirrhosis Data}

Primary biliary cirrhosis survival data collected from 1974-1984 by the Mayo Clinic and obtained via the R survival package \citep{survival-package} were fit using the THM.  Covariates included treatment group (D-penicillamine or placebo), age, sex, edema status (0 no edema, 0.5 untreated or successfully treated, 1 edema despite diuretic therapy), serum bilirunbin (mg/dl), serum albumin (g/dl), and standardised blood clotting time.  Only subjects who were randomized to treatment and had no missing values were included in the analysis for a total of 276 subjects.  The outcome of interest was time to death, and patients were censored at the end of study or if they received a transplant, yielding a censoring rate of 60\%.

Ten-thousand MCMC samples were drawn and the tree with the highest posterior probability is shown in Figure \ref{fig:pbc} with the survival curves plotted at each terminal node.  The black curves represent the posterior mean and 95\% credible intervals from the tree model, and the red lines provide mean and 95\% confidence intervals from the Kaplan-Meier curves.  The plot indicates that the THM is very flexible, and close to the Kaplan-Meier estimates.

The tree indicates that patients with bilirubin levels less than 1.9 have the highest survival rates.  For subjects with no edema and bilirubin levels greater than 1.9 and less than or equal to 5.7, patients under the age of 45 had better survival.  Patients with bilirubin levels greater than 1.9 and edema had the worst survival.  These conclusions are consistent with the findings of \citet{dickson1989prognosis}, who used a Cox proportional hazards model on primary biliary cirrhosis data, however, the tree method provides greater flexibility in estimating the survival curves, especially in this case where the hazards are not proportional.  Last, we note that the tree did not split on treatment status, which is consistent with the findings of \citet{gong2004d} who performed a systematic review of the effect of D-penicillamine on patients with primary biliary cirrhosis and concluded that it does not decrease the risk of mortality.

\begin{figure}
    \centering
    \includegraphics[scale=.55]{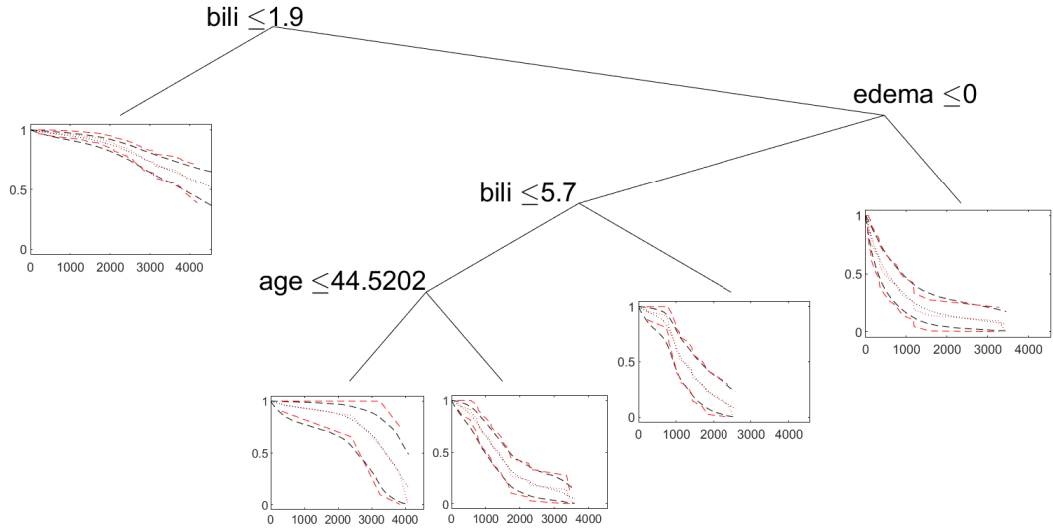}
    \caption{The tree with the highest posterior probability and the posterior mean (black, dotted), 95\% credible intervals (black, dashed), and the Kaplan-Meier estimates (red) at each terminal node.}
    \label{fig:pbc}
\end{figure}

\section{Conclusion}
The main advantage the tree hazards model (THM) is its ease in interpretation coupled with its flexibility.  Inference is easily obtained from inspecting posterior trees and the flexibly estimated survival curves at the terminal nodes.  In both simulations and applications, the model performs well in identifying interpretable partitions and flexibly estimating survival curves.  Furthermore, we have established weak consistency for the partition model under the proposed kernel, where we can accommodate flexible forms of the hazard function under moderate conditions.

\bibliographystyle{apa}
\bibliography{myreferences}

\newpage
\section{Supporting Materials}
\subsection{Appendix i: MCMC}
\subsubsection{Parallel tempering}
From their inception, Bayesian tree models have been notorious for their difficulty in mixing when using MCMC procedures.  Various methods have been proposed to overcome this issue including restarting the MCMC chain multiple times \citep{chipman1998bayesian}, parallel tempering \citep{gramacy2010categorical}, and multiset sampling \citep{leman2009}.

We implement parallel tempering \citep{geyer1991markov} modeled after \citet{gramacy2010categorical} to search the multi-modal posterior distribution and effectively search the parameter space.  The parallel tempering algorithm starts multiple MCMC chains in parallel, each with a slightly different target distribution, specifically
\[
p^{(j)}(T \mid \by) \propto \pi(T) \left(\prod_{i=1}^M p(\by_i)\right)^{t_j}
\]
where $1 = t_1 > t_2 > \ldots > t_d \geq 0$ are called the inverse temperatures, and are chosen using the sigmoidal ladder \citep{gramacy2010categorical}.  The algorithm proceeds by starting $d$ parallel chains with target distributions $p^{(1)}(T \mid \by),\ldots,p^{(d)}(T \mid \by)$.  Once every $l^\prime$ iterations, two adjacent chains are randomly selected and switch trees with a Metropolis-Hastings update with acceptance probability
\[
\alpha_2 = \min\left\{1, \frac{\tilde{p}^{(j)}(T^{j-1} \mid \by) \tilde{p}^{(j-1)}(T^{j} \mid \by)}{\tilde{p}^{(j)}(T^{j} \mid \by) \tilde{p}^{(j-1)}(T^{j-1} \mid \by)}      \right\}
\]
where $T^{j}$ is the current tree on the $j$th process and $T^{j-1}$ is the current tree on process $j-1$.  As before, $\tilde{p}^{(j)}(\cdot \mid \cdot)$ represents the Laplace approximation of the posterior.  The resulting chain for $\tilde{p}^{(1)}(T \mid \by)$ provides draws from the approximate posterior distribution and explores the parameter space more efficiently than a single reversible jump MCMC chain.


\subsubsection{Computational considerations} \label{sec:comp}
Although the Laplace approximation reduces the dimension of the parameter space which needs to be explored via MCMC, there are still a number of computational challenges.  The main computational cost in the algorithm comes from optimizing the covariance function parameters, $l_i$ and $\tau_i$, in each terminal node.  Optimizing these parameters requires finding  $\hat{\Bf}_i = \arg\max_{\Bf_i} p(\by_i \mid \Bf_i) \pi(\Bf_i)$ for each evaluated pair which involves solving multiple linear systems, e.g. in Newton's method, and a determinant.

Fortunately, the choice of the covariance function for $\omega_i(\cdot)$ along with a regular grid imply that we have a sparse closed-form solution for 
\[
\Sigma_i^{-1} = \frac{1}{\tau_i^2 (1 - \rho_i^2)} \begin{pmatrix}
1 & -\rho_i & & &   \\
-\rho_i & 1 + \rho_i^2 & -\rho_i &  &  \\
&  \ddots & \ddots & \ddots & \\
&  & -\rho_i & 1 + \rho_i^2 & -\rho_i \\
 & & & -\rho_i & 1
\end{pmatrix}
\]
where $\rho_i = \exp(-\Delta z / l_i)$ where $\Delta z$ is the distance between points of the regular grid.  This closed form implies that we never need to form the dense matrix $\Sigma_i$, and the matrices in \eqref{eq:laplace} can be stored in a sparse matrix form, reducing memory and computation demands. 


Lastly, although we use the same bins for each hazard function, we use a subset of the bins in each terminal node since the ranges of observed survival times will differ across partition elements.  This reduces the computation burden as well as avoids extrapolating beyond the terminal node's data.

\subsubsection{MCMC Details}
We now proceed by describing the acceptance ratio of the MCMC algorithm in more detail.  The acceptance proposal for a proposed tree in equation \eqref{eq:acceptance} is equivalent to

\[
\alpha = \min\left\{1,\frac{q(T \mid T^1) \tilde{p}(\by \mid T^1) \pi(T^1)}{q(T^1 \mid T) \tilde{p}(\by \mid T) \pi(T)}  \right\}
\]

where $q(\cdot \mid \cdot)$ is the proposal distribution from which $T^1$ was generated and $\tilde{p}(\by \mid T)$ is the approximate marginal likelihood.  The prior evaluation, $P(T)$, is given in \eqref{eq:prior}.  We now provide additional details regarding the evaluation of $q(\cdot\mid \cdot)$, which differ depending on which type of tree modification is proposed (i.e. grow, prune, change, swap).

\subsubsubsection{Grow}
The proposal probability for a grow step, $q_g(T^1\mid T)$, is $p_g (n_g m_i m^\star_{ij})^{-1}$ where $p_g$ is the probability of proposing a grow step, $n_g$ is the number of nodes on the current tree $T$ that are capable of a grow step, $m_i$ is the number of variables that have valid split rules on the $i$th node which was selected for a grow step, and $m^\star_{ij}$ is the number of split rules available on the $j$th variable on the $i$th node selected as the split variable.

\subsubsubsection{Prune}
The proposal probability for a prune step, $q_p(T^1\mid T)$, is $p_p (n_d)^{-1}$ where $p_p$ is the probability of proposing a prune step and $n_d$ are the number of interior nodes for which a prune step is possible (i.e. parent nodes which have both children as terminal nodes).  

\subsubsubsection{Change}
The proposal probability of a change step, $q_c(T^1\mid T)$ is $p_{ch} n_c^{-1} p_{\text{prior}} (m_i m^\star_{ij})^{-1}$ where $p_{ch}$ is the probability of proposing a change step, $n_c$ is the number of interior nodes eligible for a change step (which is all interior nodes if one includes the trivial change step of keeping the same rule if there is only one rule available to a particular node), and $p_{\text{prior}}$ is the probability of drawing a rule from the prior.  Note that if the current rule is based on a categorical variable,  $p_{\text{prior}} = 1$ and $m_i$, $m^\star_{ij}$ are the same as those found in $q_g(\cdot,\cdot)$.  If however, the current rule is a continuous variable, $p_{\text{prior}} \in [0,1]$ is the value pre-specified by the user.  If the new rule is proposed by the prior, $m_i$, $m^\star_{ij}$ are the same as those found in $q_g(\cdot,\cdot)$.  If, however, the new rule is chosen by selecting the next largest or smallest rule at random from the current continuous variable on which the current rule is based, then $m_i = 1$ and $m^\star_{ij} = 2$ unless the current rule is on the boundary of available rules or is the only rule available, then $m^\star_{ij} = 1$.

\subsubsubsection{Swap}
The proposal probability of a swap step, $q_s(T^1\mid T)$ is $p_s n_s^{-1}$ where $p_s$ is the probability of proposing a swap step and $n_s$ is the number of possible swap steps available on the current tree, $T$.

\bigskip

Prior to running the reversible jump MCMC algorithm, the user specifies $p_g,p_p,p_{ch}$ and $p_s$.  However, on a given MCMC iteration, these values may need to be adjusted.  For instance, if one starts with the root node, the only possible proposal is a grow step, therefore $p_g = 1$ and $p_p = p_{ch} = p_s = 0$. Other examples (not necessarily an exhaustive list) include the fact that a swap step is not possible with fewer than 3 terminal nodes, and birth steps may not be possible if the tree grows large enough that splitting any terminal node would result in a terminal node with fewer than the minimum number of required observations required by the user.  Therefore, on each iteration of the MCMC chain, one must check to determine which proposals are actually available and adjust the probabilities accordingly.  This is also the case when calculating the probability of returning to the current tree from the proposed tree, $q(T \mid T^1)$.

\subsection{Appendix ii: Proof of the results} \label{sec:proofs}
We assume the following:
 \begin{itemize}
 \item[A1] $\log (h_{0,i}(t))=\int_0^\infty \tilde{g}_i(s)K(s/l_i^*,t/l_i^*)ds+\beta_{0,i}$, for some bounded, Lipschitz-continuous and integrable functions $\tilde{g}_i$'s, with $ w_{0.i}(t)=\int_0^\infty \tilde{g}_i(s)K(s/l_i^*,t/l_i^*)ds.$
 \item[A2] $E_{p_{0,i}}[t],\ E_{p_c}[t]<\infty$.
 \item[A3] $\max_t|\int_0^{T'}\tilde{g}_i(s)K(s/l_i^*,t/l_i^*)ds-w_{0,i}(t)|=o(1/T')$, as $T'\uparrow \infty$,  where $w_{0,i}(t)$'s are defined in [A1].
 \item[A4]  The covariates are assumed to be bounded. Each continuous covariate is partitioned into $m_{n}$ equi-spaced  grids for the tree splitting rule, such that $m_n\rightarrow \infty$ as $n\uparrow \infty$ and   $-\log \pi(T)= o(n)$ for any  tree $T$ such that $\#(T)<\infty$. 
 \item[A5] $t^\alpha\preceq max\{-\log (1-F_c(t)),-\log (1-F_{0,i}(t))\}; t>0$, for some $\alpha>0$ and for all $i=1,\ldots,m^*$. Also, $p_c$ and $p_{0,i}$'s have the same support.  Here $\preceq$ denotes less than equal to a constant multiple.
 \item [A6]Priors $\pi(\tau^2_i)$ and  $\pi(l_i)$ have strictly positive continuous densities on the positive real line, and $\pi(\beta_i)$ has a strictly positive  continuous density on the real line. 
 \end{itemize}
 
 Conditions [A1] and [A3] ensure that a constant shift of the  true log-hazard function remains in the Gaussian process support. Conditions [A1], [A4], [A5], and [A6] guarantee enough prior probability in a small neighborhood around  the true data generating density under the partition-tree model. Condition [A2] is needed for prior Kullback-Leibler support.
  
 Suppose the  true partition $\mathscr{P}^*$  is achieved for  a tree $T^*$, and let $\hat{\mathscr{P}}^*=\hat{\mathscr{P}}_T^*=\{\hat{P}_l^*:l=1,\cdots,m^*\}$ corresponding to $T=\hat{T}^*$ be an approximating partition, with $\hat{T}^*$ having the same number of terminal nodes as in $T^*$. We define the following distance between these two partitions  by
\[d(\mathscr{P}^*,\hat{\mathscr{P}}^*)=\text{min}_{c(1),\dots,c(m^*)} \sum_{i=1}^{m^*}\mu_x(P_i^*\Delta\hat{P}_{c(i)}^*)  \] where $c(1),\ldots,c(m^*)$ is a permutation of $1,\ldots,m^*$, and  $\mu_x$ is a measure associated with the covariate space which can be an absolutely continuous Lebesgue measure, a counting measure, or a product of both. Let $\mathscr{P}^*\cap \hat{\mathscr{P}}^*=\cup_{i=1}^{m^*}P_i^*\cap \hat{P}^*_{\hat{c}(i)}$ and  $\mathscr{P}^*\Delta \hat{\mathscr{P}}^*=\cup_{i=1}^{m^*}P_i^*\Delta \hat{P}^*_{\hat{c}(i)}$, where $\hat{c}(\cdot)$ is the minimizing permutation in the definition of $d(\mathscr{P}^*,\hat{\mathscr{P}}^*)$. For notational simplicity,  later $P_i^*\cap \hat{P}^*_{\hat{c}(i)}$ will be denoted by $P_i^*\cap \hat{P}^*_i$, that is $\hat{P}^*_i$ approximates $P^*_i$.

Now, we can state the following results. The proofs for the results are relegated to the appendix. 

 \begin{proposition}
Under [A5], $T_n=o(n^\beta)$ almost surely,  for any $\beta>0$, where $T_n=\max_i\{y_i\}$.
\label{max_sup}
\end{proposition}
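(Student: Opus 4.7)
The plan is to obtain an exponential tail bound for the marginal distribution of a single observed survival time $y_j$, then apply a union bound and Borel--Cantelli to control the maximum.

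First I would compute the marginal survival function of an observed $y_j$ conditional on its covariate $\mathbf{x}_j$ lying in true partition $P_i^*$. From the model in equation (\ref{likelihood1}), integrating over $\delta \in \{0,1\}$ gives
\[
P(y_j > t \mid \mathbf{x}_j \in P_i^*) \;=\; \int_t^\infty p_{0,i}(s)\bigl(1-F_c(s)\bigr)\,ds \;+\; \int_t^\infty p_c(s)\bigl(1-F_{0,i}(s)\bigr)\,ds,
\]
and a short integration by parts collapses this to the familiar product form $(1-F_{0,i}(t))(1-F_c(t))$, i.e.\ the survival function of $\min(Y,C)$. I would then invoke assumption [A5]: since $\max\{-\log(1-F_c(t)),-\log(1-F_{0,i}(t))\}\succeq t^\alpha$, at least one of the two survival factors is bounded by $e^{-c t^\alpha}$ for some constant $c>0$; the other factor is trivially bounded by $1$. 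Because [A5] is uniform in $i = 1,\dots,m^*$, this yields the covariate-free tail bound
\[
P(y_j > t) \;\leq\; e^{-c t^\alpha}, \qquad t \geq 0,
\]
for a constant $c>0$ independent of $j$.

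Next I would control $T_n = \max_j y_j$ by a crude union bound, $P(T_n > t) \leq n\,e^{-c t^\alpha}$. Fix any $\beta>0$ and set $t = n^{\beta/2}$; then
\[
\sum_{n=1}^\infty P\!\left(T_n > n^{\beta/2}\right) \;\leq\; \sum_{n=1}^\infty n \exp\!\bigl(-c\, n^{\alpha\beta/2}\bigr) \;<\; \infty,
\]
because the exponential decay beats any polynomial factor. The Borel--Cantelli lemma then implies $T_n \leq n^{\beta/2}$ eventually with probability one, whence $T_n / n^\beta \leq n^{-\beta/2} \to 0$ almost surely. Since $\beta>0$ was arbitrary, this gives $T_n = o(n^\beta)$ a.s.

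I do not anticipate a substantive obstacle here: the only mildly delicate step is verifying that the product $(1-F_{0,i}(t))(1-F_c(t))$ inherits the sub-Weibull tail from assumption [A5] uniformly over the finitely many true partitions. Once that is in place, the argument is a standard Borel--Cantelli calculation, and the loss from $O$ to $o$ is handled by the device of running the argument with $\beta/2$ in place of $\beta$.
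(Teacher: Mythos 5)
Your proposal is correct and follows essentially the same route as the paper: an exponential tail bound on each observation derived from [A5], a union bound giving $P(T_n \geq t) \leq n e^{-c t^{\alpha}}$, and Borel--Cantelli applied along a polynomial threshold strictly smaller than $n^{\beta}$ (your $n^{\beta/2}$ plays the role of the paper's $n^{\beta'}$ with $\beta' < \beta$). The only difference is that you spell out the derivation of the tail bound from the censored-data likelihood, which the paper leaves implicit.
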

 \begin{proposition}
For partition $\mathscr{P}^*$, given $\xi>0$ there exists a set $S_T^{\xi}=\{T: d( \hat{\mathscr{P}}_T^*,\mathscr{P}^*)<\xi , \#(T)=\#(T^*)\}$ such that  under [A4], $-\log \pi(S_T^{\xi})=o(n)$.
\label{tree_support1}
\end{proposition}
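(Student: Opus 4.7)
The plan is to lower bound $\pi(S_T^\xi)$ by the prior of a single carefully chosen tree $\hat{T}^* \in S_T^\xi$ and show that this single-tree prior is already $\exp(-o(n))$. Since $\#(T^*)=m^*$ is a fixed finite constant and the covariate space is bounded under [A4], the only quantity that scales with $n$ is the grid resolution $m_n$ used to discretize the continuous split rules, so the argument reduces to tracking how $\log m_n$ enters the prior.

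First I would construct $\hat{T}^*$ by mirroring the topology of $T^*$: identical internal/terminal pattern, the same split variable $v_\jp$ at each internal node, with categorical rules copied verbatim. For each continuous rule with threshold $r_\jp$, I would replace it by its closest neighbor among the $m_n$ equi-spaced grid points made available under [A4]. By construction $\#(\hat{T}^*)=\#(T^*)=m^*$. The symmetric difference $P_i^* \Delta \hat{P}_i^*$ is then contained in a union of at most $m^*-1$ axis-aligned slabs, one per continuous split, each of width $O(1/m_n)$ in its coordinate. Since $\mu_x$ is finite on a bounded domain, summing over $i$ gives $d(\hat{\mathscr{P}}_T^*,\mathscr{P}^*) \leq C/m_n$ for a constant $C$ depending only on $m^*$ and the dimension. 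Because $m_n \to \infty$, for all sufficiently large $n$ we have $\hat{T}^* \in S_T^\xi$.

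Next I would lower bound $\pi(\hat{T}^*)$ via the Chipman prior in \eqref{eq:prior}:
\[
\pi(\hat{T}^*) \;=\; \prod_{\ip=1}^{2m^*-1} \pi(S_\ip) \prod_{\jp=1}^{m^*-1} \pi(R_\jp).
\]
Each split-indicator factor equals either $\gamma(1+d_\ip)^{-\theta}$ or $1-\gamma(1+d_\ip)^{-\theta}$, both bounded away from $0$ and $1$ by constants depending only on $\gamma$, $\theta$, and the depth of $\hat{T}^*$ (at most $m^*-1$); their product is therefore at least some fixed $c_1>0$. Each rule factor satisfies $\pi(R_\jp)=1/(m_\jp m^\star_\jp) \geq 1/(p\, m_n)$, since $m_\jp \leq p$ and $m^\star_\jp \leq m_n$. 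Combining,
\[
\pi(S_T^\xi) \;\geq\; \pi(\hat{T}^*) \;\geq\; c_1\,(p\, m_n)^{-(m^*-1)},
\]
so $-\log \pi(S_T^\xi) \leq -\log c_1 + (m^*-1)(\log p + \log m_n)$. Since $m^*$ and $p$ are fixed and $\log m_n = o(n)$ is implied by the prior-mass hypothesis in [A4], the right-hand side is $o(n)$, which is the claim.

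The main obstacle is the geometric bookkeeping in the first step: one must verify that replacing continuous thresholds by grid points still yields a valid tree whose partition lies within $\xi$ of $\mathscr{P}^*$ and retains exactly $m^*$ terminal nodes (no degenerate rules, no empty child). Once $m_n$ is large enough this is routine because the true thresholds sit in the interior of the bounded covariate domain, but handling the mixture of categorical and continuous splits simultaneously — and keeping track of which slabs can overlap when summing the symmetric differences across partition cells — is where most of the technical care resides. Everything after that is an elementary counting bound driven entirely by [A4].
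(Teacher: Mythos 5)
Your proposal is correct and follows essentially the same route as the paper: mirror the topology and split variables of $T^*$, copy categorical rules exactly, move each continuous threshold to a nearby grid point so that $d(\hat{\mathscr{P}}_T^*,\mathscr{P}^*)<\xi$ for large $n$, and invoke [A4] to control the prior. The only (immaterial) difference is that you lower-bound $\pi(S_T^\xi)$ by a single tree, giving a $-\log$ bound of order $\log m_n$, whereas the paper bounds it by the set of trees whose thresholds land in fixed-width windows $(r_i^*\pm c_\delta)$, giving an $O(1)$ bound; both are $o(n)$ under [A4].
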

Next, using Proposition \ref{max_sup} and the results from \citet{van2008reproducing},  \citet{aurzada2008small}  on Gaussian process small ball probability, we show the following. 
\begin{lemma}

Let $\tilde{{w}}_{0,i}^n(t)=\int_0^{T_n}\tilde{g}_i(s)K(s/l_i^*,t/l_i^*)ds$ for $0\leq t\leq T_n$.  For any $\xi>0$, $S^i_\xi=\{w_i(\cdot):|w_i(t)-\tilde{w}^n_{0,i}(t)|<\xi/T_n, 0\leq t\leq T_n\}$. Then under  the Gaussian process prior on $w_i(\cdot)$ from \eqref{prior}, under A5, A6, $-\log\pi(S^i_\xi)=o(n)$.
\label{gp_support}
\end{lemma}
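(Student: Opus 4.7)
The plan is to apply the decentering inequality of \citet{van2008reproducing}: for a centered Gaussian element $W$ with RKHS $\mathbb{H}$ and any $h\in\mathbb{H}$,
\[
-\log P\bigl(\|W-h\|_\infty \le 2\epsilon\bigr) \;\le\; \tfrac12 \|h\|_{\mathbb{H}}^2 - \log P\bigl(\|W\|_\infty\le\epsilon\bigr).
\]
Applied with $h=\tilde{w}_{0,i}^n$ and $\epsilon=\xi/(2T_n)$, this reduces the proof to two sub-bounds: the RKHS-norm of the shifted target, and the centered sup-norm small-ball probability of the OU process on $[0,T_n]$ at radius of order $\xi/T_n$.

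First, I would use A6 to restrict $(\tau_i^2,l_i,\beta_i)$ to a fixed compact neighborhood of $(\tau_i^{*2},l_i^*,\beta_{0,i})$; by strict positivity and continuity of the prior densities, this localization costs only $O(1)$ in $-\log\pi$, so I may treat the hyperparameters as essentially equal to their true values throughout. Conditional on this event, $w_i$ is a centered OU process with kernel $\tau_i^{*2}K_{l_i^*}$ on $[0,T_n]$, with RKHS $\mathbb{H}_n$. The integral representation in A1 places $\tilde{w}_{0,i}^n$ directly inside $\mathbb{H}_n$ with
\[
\|\tilde{w}_{0,i}^n\|_{\mathbb{H}_n}^2 \;\preceq\; \int_0^{T_n}\!\!\int_0^{T_n}\! \tilde{g}_i(s)\tilde{g}_i(t)K(s/l_i^*,t/l_i^*)\,ds\,dt,
\]
and since $\tilde{g}_i$ is bounded by A1 and the integral operator induced by the OU kernel has bounded $L^1$-norm on each row, this is at most $O(T_n)$ (and $O(1)$ if $\tilde{g}_i\in L^2$).

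The main obstacle is the second sub-bound: a sharp sup-norm small-ball estimate for the OU process on $[0,T_n]$ at shrinking radius $\xi/T_n$. Using the small-ball machinery of \citet{aurzada2008small} for stationary Gaussian processes, together with the $\omega^{-2}$ spectral decay of the OU kernel (which makes the sample paths locally Brownian), one obtains
\[
-\log P\Bigl(\sup_{t\in[0,T_n]}|w_i(t)| \le \xi/T_n\Bigr) \;\preceq\; T_n\cdot (T_n/\xi)^2 \;=\; T_n^3/\xi^2,
\]
where the factor $T_n$ reflects stationarity (decorrelation over intervals of length $\sim l_i^*$) and the factor $(T_n/\xi)^2$ reflects the Brownian-type local small-ball rate $e^{-c/\epsilon^2}$ on unit intervals. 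Combining through the decentering inequality,
\[
-\log\pi(S_\xi^i) \;\le\; O(1) + \tfrac12\|\tilde{w}_{0,i}^n\|_{\mathbb{H}_n}^2 + O\!\left(T_n^3/\xi^2\right) \;=\; O(T_n^3/\xi^2).
\]
By Proposition \ref{max_sup}, $T_n=o(n^\beta)$ almost surely for every $\beta>0$, so choosing any $\beta<1/3$ yields $T_n^3=o(n)$ and hence $-\log\pi(S_\xi^i)=o(n)$, as required. The delicate point is verifying uniformity of the small-ball constants over the localized hyperparameter neighborhood of the first step, which is a routine continuity argument in $(l_i^*,\tau_i^*)$.
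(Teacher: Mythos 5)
Your proof follows essentially the same route as the paper's: both bound $-\log\pi(S^i_\xi)$ via the decentering (concentration--function) inequality, control the RKHS norm of the shifted target by a polynomial in $T_n$, invoke the Aurzada et al.\ small-ball bound of order $T_n^3$ at sup-norm radius $\xi/T_n$, and finish with Proposition \ref{max_sup} to get $T_n^3=o(n)$. The only differences are in execution --- the paper works with a Riemann-sum approximant of $\tilde w^n_{0,i}$ placed in the RKHS of a kernel with hyperparameters in a shrinking (rather than fixed compact) neighborhood of $(l_i^*,1)$, which sidesteps the uniformity-over-hyperparameters issue you defer to a continuity argument --- but the skeleton and the $o(n)$ accounting are identical.
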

 
 Proposition \ref{tree_support1} and Lemma \ref{gp_support} establish a large enough prior probability around the truth, and Proposition \ref{max_sup} bounds $T_n$ and is used in Lemma \ref{gp_support}.  Now, we state the result determining the Kullback-Leibler neighborhood around the truth, which combined with Propositions \ref{max_sup}, \ref{tree_support1}, and Lemma \ref{gp_support} will guarantee that we have sufficient prior probability around a small Kullback-Leibler neighborhood around the truth. The support for the $p_{0,i},\ p_c$ are assumed to be on the positive real line, under which $T_n$ increases to infinity almost surely,  as $n$ increases to infinity. If the supports are the same for $p_{0,i},\ p_c$ and bounded then  the integral in assumption [A1] is replaced by the integral over the support of $p_{0,i}$'s and the  following Theorem \ref{thm1} will hold.

 \begin{lemma}
Let $p=\hat{p}$ be an estimator of $p^*(t,\delta,{\bf x})$ with hazard function $h_i(t)=e^{w_i(t)+\beta_i}$  for $0\leq t\leq T_n$ in partition $\hat{P}_i$ and $h_i(t)=h_i(T_n)$ for $t>T_n$, where $\hat{\mathscr{P}}^*=\{\hat{P_i}^*\}_{i=1,\dots,m^*}$ a partition of the covariate space,  and $d(\mathscr{P},\hat{\mathscr{P}^*})<\epsilon$.
Then, given $\epsilon>0$, there exists $\xi>0$ such that  for  $|\beta_i-\beta_{0,i}|<\xi$ and  $|w_i(t)-w_{0,i}(t)|<\xi$, for $0\leq t\leq T_n$, in $\hat{\mathscr{P}^*}\cap {\mathscr{P}^*}$, we have under [A1]--[A3] ,  $KL(p^*,p)=\int p^* \log\frac{p^*}{p}<k\epsilon$ for large $n$, for some universal constant $k>0$ .
\label{kl_sup}
\end{lemma}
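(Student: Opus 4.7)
The plan is to split the Kullback--Leibler integral according to three criteria: the covariate region (partitions agree versus disagree), the censoring indicator $\delta$, and on the agreement region the time range ($t\leq T_n$ versus $t>T_n$), and to bound each piece separately. Writing $p^*(y,\delta\mid \mathbf{x})$ and $p(y,\delta\mid \mathbf{x})$ with $\mathbf{x}$ first restricted to $\mathscr{P}^*\cap\hat{\mathscr{P}}^*$, the log-ratio simplifies because the censoring density $p_c$ cancels: for $\delta=1$ it reduces to $\log(p_{0,i}(t)/p_i(t))=(w_{0,i}(t)-w_i(t))+(\beta_{0,i}-\beta_i)+(H_i(t)-H_{0,i}(t))$ and for $\delta=0$ it reduces to $H_i(t)-H_{0,i}(t)$. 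On the symmetric-difference region $\mathscr{P}^*\Delta\hat{\mathscr{P}}^*$ the partition labels differ, so the log-ratio contains an additional partition-mismatch term.

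For the agreement region with $t\leq T_n$, I would use the assumptions $|w_i(t)-w_{0,i}(t)|<\xi$ and $|\beta_i-\beta_{0,i}|<\xi$ to obtain a pointwise bound $|\log(h_{0,i}(t)/h_i(t))|\leq 2\xi$ and, via the elementary inequality $|e^a-e^b|\leq e^{\max(a,b)}|a-b|$, $|h_{0,i}(s)-h_i(s)|\leq 2\xi e^{\xi}h_{0,i}(s)$, hence $|H_i(t)-H_{0,i}(t)|\leq 2\xi e^{\xi}H_{0,i}(t)$. Integrating against $p_{0,i}(t)(1-F_c(t))$ or $p_c(t)(1-F_{0,i}(t))$ and using the well-known identity $E_{p_{0,i}}[H_{0,i}(T)]=1$ (since $H_{0,i}(T)$ is exponentially distributed), together with assumption [A2] giving finite expectations, yields a contribution of order $\xi$.

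For the agreement region with $t>T_n$, the estimator freezes $h_i(t)=h_i(T_n)$, so $H_i(t)-H_i(T_n)=(t-T_n)e^{w_i(T_n)+\beta_i}$ grows only linearly. Assumption [A1] combined with the exponential kernel ensures $w_{0,i}$ is uniformly bounded, so $\log(p_{0,i}/p_i)$ is at most linear in $t$. The tail condition [A5], $t^{\alpha}\preceq H_{0,i}(t)$, makes $p_{0,i}$ (and $1-F_{0,i}$) decay super-exponentially, so $\int_{T_n}^{\infty}p_{0,i}(t)(a+bt)\,dt=o(1)$, and by Proposition~\ref{max_sup} $T_n\to\infty$ almost surely, making this contribution smaller than $\epsilon$ for large $n$. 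On the symmetric-difference region $\mathscr{P}^*\Delta\hat{\mathscr{P}}^*$, the covariate measure is bounded by $d(\mathscr{P}^*,\hat{\mathscr{P}}^*)<\epsilon$; the integrand is uniformly bounded once one argues (using boundedness of $w_{0,i}$ and the closeness of $w_i$ to $w_{0,i}$ on $[0,T_n]$, plus the tail argument above for $t>T_n$) that the inner time-integral of $|\log(p^*/p)|$ is bounded, producing an $O(\epsilon)$ contribution.

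Combining the three pieces gives $KL(p^*,p)\leq C_1\xi+C_2\epsilon+o(1)$; choosing $\xi$ proportional to $\epsilon$ and taking $n$ large yields $KL<k\epsilon$ for a universal $k$. The main technical obstacle is the third piece, the tail $t>T_n$ on the agreement region: because the approximating hazard is extrapolated as a constant, the cumulative hazard difference grows linearly in $t$, so one must combine the uniform boundedness of $w_{0,i}$ from [A1]--[A3] with the super-exponential tail from [A5] and the $T_n\to\infty$ bound of Proposition~\ref{max_sup} to ensure this piece is $o(1)$ rather than diverging. Once this tail estimate is established, the remaining pieces are routine Taylor/mean-value expansions.
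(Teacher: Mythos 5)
Your proposal is correct and follows the same basic decomposition as the paper: split the KL integral over the agreement region $\hat{\mathscr{P}}^*\cap\mathscr{P}^*$ versus the symmetric difference $\hat{\mathscr{P}}^*\Delta\mathscr{P}^*$, reduce the log-ratio on the agreement region to the three terms $\log(h_{0,i}/h_i)$, $|H_i-H_{0,i}|$ weighted by $p_{0,i}\bar F_c$, and $|H_i-H_{0,i}|$ weighted by $p_c\bar F_{0,i}$, control them on a truncated time range by the $\xi$-closeness of $(w_i,\beta_i)$ to $(w_{0,i},\beta_{0,i})$, and bound the symmetric-difference contribution by $d(\mathscr{P}^*,\hat{\mathscr{P}}^*)<\epsilon$ times a moment bound on $H_{0,i}$. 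Where you differ is in the tail handling: you split the time axis at the random point $T_n$ and invoke the tail condition [A5] together with Proposition~\ref{max_sup} to kill $\int_{T_n}^\infty$, whereas the paper splits at a fixed $\epsilon$-dependent $T_0$ and disposes of $\int_{T_0}^\infty$ using only the first-moment condition [A2] and the boundedness of the hazards (so that $H_{0,i}(t),H_i(t)\preceq t$), then notes $T_n\ge T_0$ eventually. The paper's route is lighter (it stays within the lemma's stated hypotheses [A1]--[A3], which include [A2] but not [A5]) and avoids the extrapolation issue you flag as the main obstacle. You should also be careful with your reading of [A5]: it is a condition on the \emph{maximum} of $-\log\bar F_c$ and $-\log\bar F_{0,i}$, so it does not give super-exponential decay of $1-F_{0,i}$ by itself, only of the product $\bar F_c(t)\bar F_{0,i}(t)$; this happens to be harmless here because every tail integrand pairs a density with the \emph{other} survival function, and in any case [A2] already suffices via dominated convergence since the cumulative-hazard differences grow at most linearly. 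Your relative bound $|H_i-H_{0,i}|\le 2\xi e^{2\xi}H_{0,i}$ combined with $E_{p_{0,i}}[H_{0,i}]=1$ is a clean alternative to the paper's $|H_i-H_{0,i}|\le 2k_0\xi' t$ combined with $E_{p_{0,i}}[t]<\infty$; both work.
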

\subsubsection{Proof of Proposition \ref{max_sup}}
Note that for $\beta>\beta'>0$, by $[A5]$, $P(T_n\geq n^{\beta'}) \leq ne^{-c n^{\alpha\beta'}}$ for a constant $c>0$, and \[P(T_n\geq n^{\beta'} \text{ infinitely often })\leq lim_{k\uparrow \infty}\sum_{n=k}^\infty ne^{-cn^{\alpha\beta'}}= 0.\]
Hence, the result follows.
\subsubsection{Proof of Lemma \ref{gp_support}}
We have  $\tilde{{w}}_{0,i}^n(t)=\int_0^{T_n}\tilde{g}_i(s)K(s/l_i^*,t/l_i^*)ds$ for $0\leq t\leq T_n$ and let $\hat{{w}}_{0,i}^n(t)=\frac{c_\xi}{T_n^2}\sum_{j=1}^{\lceil{T^3_n/c_\xi}\rceil}\tilde{g}(s_j)\tau^2_iK(s_j/l_i,t/l_i)$ be its approximation by a Riemann-sum type expansion using $\frac{c_\xi}{T_n^2}$ length grids $s_1=0,s_j-s_{j-1}=\frac{c_\xi}{T_n^2}$, and so on  and hyper-parameters are $l_i,\tau^2_i$.  Without loss of generality, we assume $T_n>1$. 

We have: $|\tilde{{w}}_{0,i}^n(t)- \hat{{w}}_{0,i}^n(t)|\leq \sum_{j=1}^{\lceil{T^3_n/c_\xi}\rceil}\large(\int_{s_j}^{s_{j+1}}|\tilde{g}_i(s)| |K(s/l_i^*,t/l_i^*)-K(s/l_i,t/l_i)|ds+\int_{s_j}^{s_{j+1}}|\tilde{g}_i(s)-\tilde{g}_i(s_j)|K(s/l_i,t/l_i)ds+\int_{s_j}^{s_{j+1}}|\tilde{g}_i(s_j)||K(s/l_i,t/l_i)-K(s_j/l_i,t/l_i)|ds+|\tau^2_i-1|\frac{c_\xi}{T_n^2}\sum_{j=1}^{\lceil{T^3_n/c_\xi}\rceil}|\tilde{g}(s_j)|K(s_j/l_i,t/l_i)\large)$.  Here $l_i\in[l_i^*,l_i^*+c_\xi/T^3_n)$ and $\tau^2_i\in[1,1+c_\xi/T^2_n)$ and  $c_\xi>0$ is a small constant such that 
$|\hat{w}_{0,i}^n(t)-w_{0,i}(t)|<\frac{\xi}{2T_n}$, $0\leq t \leq T_n$.

Note that  $\|\hat{w}_{0,i}^n\|^2_{H_{l_i,n}}<T_n^2M_1$, where ${H_{l_i,n}}$ is the {\it Reproducing Kernel Hilbert Space} (RKHS) of a Gaussian process on the interval $[0,T_n]$ with covariance kernel $K_{l_i}(s,t)=K(s/l_i,t/l_i)$ and $\|h\|^2_{H_{l_i,n}}$ is the squared  RKHS norm of $h\in H_{l_i,n}$ and $M_1$ is a constant. The  RKHS is the closer of linear combinations $\sum_{j=1}^{k} a_jK_l(t_j,t)$, $t_j,t \in [0,T_n]$,  under the Hilbert space norm induced by  inner products $\langle\sum_{j=1}^{k} a_jK_l(t_j,\cdot),\sum_{j'=1}^{k'} b_{j'}K_l(s_{j'},\cdot)\rangle=\sum_{j,j'}a_jb_{j'}K_l(t_j,s_{j'})$; $s_{j'},t_j\in[0,T_n]$. 

From [A6], $-\log\pi(l_i\in [l_i^*,l_i^*+c_\xi/T^3_n))=o(n)$, $-\log\pi(\tau^2\in [1,1+c_\xi/T^2_n))=o(n)$.

Now \[-\log\pi(S^i_{\xi,l_i,\tau^2_i})\leq \|\hat{w}_{0,i}^n\|^2_{H_{l_i,n}}-\log(P(\|w^n\|_\infty<\frac{\xi}{2T_n}),\]
where $w^n$ follows Gaussian process on $[0,T_n]$ with covariance kernel $\tau_i^2K(s/l_i,t/l_i)$, and $S^i_{\xi,l_i,\tau^2_i}$ denotes the set $S^i_\xi$ for fixed hyper-parameter values $l_i, \tau_i^2$. The above inequality can be  found  in the Gaussian process literature and Gaussian process density estimation applications (for example in \citet{van2008reproducing}).

Next, we use a conservative  lower bound for the small ball probability for a Gaussian process using the results from \citet{aurzada2008small}.  From, the fact $\rho^2(s,t)=E[(w_n(s)-w_n(t))^2]\sim |s-t|$ for $|s-t|$ sufficiently small. The covering number of $[0,T_n]$ by the pseudo-metric $\rho$, by $\xi/T_n$ radius ball is of the order of $T_n^3=o(n)$ (Proposition \ref{max_sup}) and satisfies the conditions of Theorem 1 from  \citet{aurzada2008small} and hence, $-\log(P(\|w^n\|_\infty<\xi/(2T_n))=o(n)$.

Together combining the parts, $-\log\pi(S^i_\xi)=o(n)$.

%

\subsubsection{Proof of Lemma \ref{kl_sup}}
Note that  $h_{0,i}$ are bounded away from zero and  infinity and suppose, $0< m_0< h_{0,i}\leq M_0<\infty$. 
Expanding $KL(p^*,p)$, for  ${\bf x}\in  {\hat{\mathscr{P}^*}\cap {\mathscr{P}^*}}$, it boils down to showing for universal  constants $c_1,c_2,c_3>0$,
\begin{eqnarray}
  |\int_0^\infty p_{0,i}(t)\bar{F}_c(t)\log\frac{h_{0,i}(t)}{h_i(t)}dt|<c_1\epsilon; \nonumber \\
 \int_0^\infty p_{0,i}(t) \bar{F}_c(t)|H_i(t)-H_{0,i}(t)|dt<c_2\epsilon; \nonumber \\ \int_0^\infty p_c(t) \bar{F}_{0,i}(t)|H_i(t)-H_{0,i}(t)|dt<c_3\epsilon,
 \label{to_show}
\end{eqnarray}
where $ \bar{F}_{0,i}(t)=1-F_{0,i}(t),\   \bar{F}_{c}(t)=1-F_{c}(t)$. 

By construction, $\log h_i(t)$ is bounded over $[0,\infty)$. We can choose $T_0>0$ in the support,  such that  $\int_{T_0}^\infty [p_{0,i}(t)+p_c(t)]H_{0,i}(t)dt<M_0\int_{T_0}^\infty t[p_{0,i}(t)+p_c(t)]dt<\epsilon$, and   $\int_{T_0}^\infty [p_{0,i}(t)+p_c(t)]H_{i}(t)dt<\epsilon$, and $|\int_{T_0}^\infty p_{0,i}(t)\bar{F}_c(t)\log\frac{h_{0,i}(t)}{h_i(t)}dt|<\epsilon$. 

For large $n$, $T_n\geq T_0$ with probability one. We can choose small $\xi>0$ such that,  for $0\leq t\leq T_n$,  $|h_{0,i}-h_i(t)|<|e^{2\xi}-1|<\xi'$, and  $\int_0^{T_n} p_{0,i}(t)log\frac{h_{0,i}}{h_i}dt<\epsilon$, we have for $0\leq t\leq T_n$,  $|H_{0,i}(t)-H_i(t)|<2k_0\xi' t$; $k_0>0$ a generic universal constant. Also, $\xi'$ is small enough such that  $E_{p_{0,i}}[2k_0\xi' t],E_{p_c}[2k_0\xi' t]<\epsilon$. Hence, $ \int_0^{T_n} p_{0,i}(t) \bar{F}_c(t)|H_i(t)-H_{0,i}(t)|dt<\epsilon,  \int_0^{T_n} p_c(t) \bar{F}_{0,i}(t)|H_i(t)-H_{0,i}(t)|dt<\epsilon$.

Hence, the conditions in  equation \eqref{to_show} are satisfied for ${\bf x} \in  {\hat{\mathscr{P}^*}\cap {\mathscr{P}^*}}$. Next, we consider ${\hat{\mathscr{P}^*}\Delta {\mathscr{P}^*}}$.

  Note that,  $h_{i}$ are bounded away from zero and  infinity, for any $\xi>0$, and $H_i(t),H_{0,i}(t)\preceq t$.  Note that for observed $y_j$, $y_j=y_j^*$ if the corresponding observation is not censored and $y_j\leq y_j^*$, where $y_j^*\sim p_{0,i'}(t)$ is the uncensored observation (potentially not observed), for some $i'\in\{1,\cdots,m^*\}$. $E_{p_{0,i'}}[H_{0,i}(y)]\leq E_{p_{0,i'}}[H_{0,i}(y^*)]<C_0'<\infty$, as  $E_{p_{0,i'}}[t]<\infty$. Also,   $E_{p_c}[t]<\infty$.
  From the fact $d(\mathscr{P}^*,\hat{\mathscr{P}}^*)<\epsilon$,  for $j\neq j'$,  $\int_{\hat{\mathscr{P}^*}\Delta {\mathscr{P}^*}} \int p_{0,j}(t) \bar{F}_{c}(t) |\log\frac{h_{0,j}(t)}{h_{j'}(t)}|$  ${\bf I}_{x\in P_j^*\cap{\hat{P}_{j'}^*}}dt d\mu_x({\bf x})<k_1\epsilon$, $k_1>0$, and  $\int_{\hat{\mathscr{P}^*}\Delta {\mathscr{P}^*}} \int p_{0,j}(t) \bar{F}_{c}(t) |H_{j'}(t)-H_{0,j}(t)|$  ${\bf I}_{x\in P_j^*\cap{\hat{P}_{j'}^*}}dt d\mu_x({\bf x})<k_2\epsilon$,  and $\int_{\hat{\mathscr{P}^*}\Delta {\mathscr{P}^*}} \int $  $p_{c}(t)  \bar{F}_{0,j}(t) |H_{j'}(t)-H_{0,j}(t)|$  ${\bf I}_{x\in P_j^*\cap{\hat{P}_{j'}^*}}dt d\mu_x({\bf x})<k_2\epsilon$, where $k_1,k_2>0$ are constants; then \newline $\int_{\hat{\mathscr{P}^*}\Delta {\mathscr{P}^*}}\int  p^* \log\frac{p^*}{\hat{p}}d(t,\delta)d\mu_x({\bf x})<k_3\epsilon$,  where $k_3>0$ is a constant.
  
  Hence, combining the parts from ${\hat{\mathscr{P}^*}\Delta {\mathscr{P}^*}}$ and $ {\hat{\mathscr{P}^*}\cap {\mathscr{P}^*}}$ the result follows.

\subsubsection{Proof of Proposition \ref{tree_support1}}
Suppose, $(v_i^*,r_i^*);\ i=1,\cdots$ be associated with true splits, where $v^*_i$ denotes the covariate corresponding $i$th split and $r^*_i$ be the split location. Let $(v_i^*,\hat{r}_i^*);\ i=1,\cdots$ be associated with  approximating splits, where $ \hat{r}_i^*=r_i^*$ for $v^*_i$ discrete and $\hat{r}_i\in(r^*_i\pm c_\delta)$ for some constant $c_\delta>0$ for  $v^*_i$ continuous.  For $v^*_i$ continuous, choosing $c_\delta$ sufficiently small, we have $\mu_x(\mathscr{P}\Delta \hat{\mathscr{P}}^*)<\xi$. From the fact, we can select $\hat{r}_i$ in $(r^*_i\pm c_\delta)>0$ for large $n$, and  $-\log P(\hat{r}_i\in(r^*_i\pm c_\delta))=o(n)$, the result follows
(condition [A4]).

\subsubsection{Proof of Theorem 1}

It is enough to prove Theorem \ref{thm1} for a neighborhood of the form  $\{p:\int g p^*(\cdot)-\int g p(\cdot)<\epsilon\}$ instead. We redefine $U^{\epsilon}_g=\{p:\int g p^*(\cdot)-\int g p(\cdot)<\epsilon\}$ and note that $\{p:|\int g p^*(\cdot)-\int g p(\cdot)|<\epsilon\}$ is the intersection of $U^{\epsilon}_g$ and $U^{\epsilon}_{-g}$. We have tests $\phi_n$, $0\leq \phi_n\leq 1$, such that $\max\{E_{p^*}[\phi_n], sup_{p\in {U^{\epsilon}_g}^c }E_p[1-\phi_n] \}<e^{-cn\epsilon}$, $c>0,\epsilon>0$. See Remark 4.4.1 in \citet{ghosh2003bayesian} for details.

For the denominator  in the following equation \eqref{postcal1}, we cannot use Fatou's lemma to establish the lower bound, as the domain set of a Gaussian Process  prior changes over $n$. We instead proceed as follows.
First, we consider ${p}(\cdot)$'s within small KL distance  from the  $p^*(\cdot)$.  We then show that the  negative log of prior probability of that small KL ball is of the order of $o(n)$ and using direct calculation in the denominator, we establish a bound on the the difference of the average log-likelihood ratio.
In particular,
\begin{equation}
\Pi_n({U^{\epsilon}_g}^c|.)\leq\frac{\int_{{U^{\epsilon}_g}^c}\prod_{j=1}^n\frac{p(y_j,\delta_j,{\bf x}_j)}{p^*(y_j,\delta_j,{\bf x}_j)}d\pi(\cdot)}{\int_{U^{\epsilon_n}}\prod_{j=1}^n\frac{p(y_j,\delta_j,{\bf x}_j)}{p^*(y_j,\delta_j,{\bf x}_j)}d\pi(\cdot)}=\frac{N_n}{D_n}.
\label{postcal1}
\end{equation}
Let $\mathscr{P}^*$ be the true partition and $\hat{\mathscr{P}^*}$ be such that $d(\mathscr{P}^*,\hat{\mathscr{P}^*})<\epsilon'$. Let   $\hat{\mathscr{P}^*}=\{\hat{P}_1^*,\dots,\hat{P}_{m^*}^*\}$ and  $\hat{P}^*_i$  approximating   $P^*_i$.
In equation \eqref{postcal1}, let ${U^{\epsilon_n}}=U^{\epsilon_n}_{\hat{\mathscr{P}}^*}$ be the set of ${p}$ with  corresponding  hazard functions $\hat{h}_i$'s for their corresponding ${ {w}}_{i}(t)$'s, $\beta_i$'s,  and ${ {w}}_{i}(t)$'s  are in $ \frac{\epsilon'}{T_n }$ supremum neighborhood around $\tilde{ {w}}_{0,i}^n(t)$, in $[0,T_n]$ and $\beta_{i}$ in $(\beta_{0,i}\pm \frac{{\epsilon'}}{T_n})$ for some $\epsilon'>0$ over $\hat{P}^*_i$.   Note that $\hat{h}_i(t)=\hat{h}_i(T_n)$ for $t>T_n$. Also,  $\frac{p(y_j,\delta_j,{\bf x}_j)}{p^*(y_j,\delta_j,{\bf x}_j)}=\frac{p(y_j,\delta_j|{\bf x}_j)}{p^*(y_j,\delta_j|{\bf x}_j)}$. Here $w^n_{0,i}(t)=\int_0^{T_n}\tilde{g}_i(s)K(l^*_is,l^*_it)ds$.


Let $\hat{p}^*$ be the density under partition $\hat{\mathscr{P}^*}$ using $w_{0,i}$, $\beta_{0,i}$'s for $\hat{P}_i^*$  and let $\hat{p}$ be the density for the partition $\hat{\mathscr{P}^*}$, using   $\beta_{0,i}$'s and  $\tilde{w}^n_{0,i}$'s in $\hat{P}^*_i$.

%
Note that  from equation \eqref{postcal1}
\begin{eqnarray}
\prod_{j=1}^n\frac{p(y_j,\delta_j,{\bf x}_j)}{p^*(y_j,\delta_j,{\bf x}_j)}=\large(\prod_{j=1}^n\frac{p(y_j,\delta_j,{\bf x}_j)}{\hat{p}(y_j,\delta_j,{\bf x}_j)}\large)\large(\prod_{j=1}^n\frac{\hat{p}(y_j,\delta_j,{\bf x}_j)}{\hat{p}^*(y_j,\delta_j,{\bf x}_j)}\large)\large(\prod_{j=1}^n\frac{\hat{p}^*(y_j,\delta_j,{\bf x}_j)}{{p}^*(y_j,\delta_j,{\bf x}_j)}\large)=D^{(1)}_n\times D^{(2)}_n\times D^{(3)}_n.
\label{decomp}
\end{eqnarray}

  Now, $-\log(\pi(\hat{\mathscr{P}^*}))=o(n)$ and $-\log(\pi(\beta_{i}\in (\beta_{0,i}\pm \frac{{\epsilon'}}{T_n})))=o(n)$.  Hence,  using Lemma \ref{gp_support}, we have $-\log(\pi (U^{\epsilon_n}))=o(n)$ and therefore, choosing small enough $\epsilon'<\epsilon$ for large $n$, $e^{\frac{cn\epsilon}{4}}{e^{\log(D_n)}}>1$ using Lemma \ref{dc_prop} (given after this proof).  Hence, for large $n$
 \begin{eqnarray*}
E_{p^*}[\Pi_n({U^{\epsilon}_g}^c|.)]&=&E_{p^*}[\phi_n\Pi_n({U^{\epsilon}_g}^c|.)]+E_{p^*}[(1-\phi_n)\Pi_n({U^{\epsilon}_g}^c|.)]\\
    &\leq&E_{p^*}[\phi_n] +e^{\frac{cn\epsilon}{4}}\int_{p\in{U^{\epsilon}_g}^c} E_p[(1-\phi_n)]d\pi\\
    &\leq&2e^{-cn\epsilon/2}
\end{eqnarray*}
Hence,  $ \sum_n P_{p^*}( \Pi_n({U^{\epsilon}_g}^c|.)>e^{-cn\epsilon/4}) <\infty$ and hence by the Borel-Cantelli Lemma $\Pi_n({U^{\epsilon}_g}^c|.)  \rightarrow 0$ almost surely.

\begin{lemma}
From equation \eqref{decomp}, $n^{-1}|\log D_n^{(1)}|,\ n^{-1}|\log D_n^{(2)}|,\ n^{-1}|\log D_n^{(3)}|<k\epsilon'$ for large $n$ almost surely, for some universal constant $k>0$, uniformly over $p$ in  ${U^{\epsilon_n}}$.
\label{dc_prop}
\end{lemma}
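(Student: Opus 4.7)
The plan is to bound each of the three factors in the decomposition \eqref{decomp} separately, exploiting the fact that $D_n^{(1)}$---the only factor depending on $p \in U^{\epsilon_n}$---admits a deterministic per-observation bound on the log-density ratio, while $D_n^{(2)}$ and $D_n^{(3)}$ are ratios of fixed densities that can be handled either by a direct pointwise bound or by a strong law of large numbers (SLLN) argument combined with Lemma \ref{kl_sup}.

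For $D_n^{(1)}$, fix any $p \in U^{\epsilon_n}$ with log-hazard $\beta_i + w_i(\cdot)$ on the cell $\hat{P}_i^*$. The definition of $U^{\epsilon_n}$ gives $|\beta_i - \beta_{0,i}| < \epsilon'/T_n$ and $\sup_{t \in [0,T_n]}|w_i(t) - \tilde{w}_{0,i}^n(t)| < \epsilon'/T_n$, hence $\sup_{[0,T_n]}|\log h_i - \log \tilde{h}_i| < 2\epsilon'/T_n$, where $\tilde{h}_i = \exp(\beta_{0,i} + \tilde{w}_{0,i}^n)$ is the hazard for $\hat{p}$. Since [A1] makes $\tilde{h}_i$ uniformly bounded on $[0,T_n]$, one also obtains $\sup_{[0,T_n]}|h_i - \tilde{h}_i| \leq C \epsilon'/T_n$. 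For any observation, $y_j \in [0,T_n]$ automatically, so the log-ratio decomposes as $\delta_j(\log h_i(y_j) - \log \tilde{h}_i(y_j)) - (H_i(y_j) - \tilde{H}_i(y_j))$; the first term is at most $2\epsilon'/T_n$ and the second is at most $\int_0^{T_n}|h_i - \tilde{h}_i|\,ds \leq C\epsilon'$. Summing over $j$ and dividing by $n$ yields $n^{-1}|\log D_n^{(1)}| \leq k_1 \epsilon'$, uniformly in $p$.

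The argument for $D_n^{(2)}$ is identical in form, except that [A3] replaces $\epsilon'/T_n$ with $o(1/T_n)$, so $n^{-1}|\log D_n^{(2)}|$ is eventually smaller than $\epsilon'$. For $D_n^{(3)}$, the log-ratio vanishes on $\mathscr{P}^* \cap \hat{\mathscr{P}}^*$, and since $D_n^{(3)}$ does not depend on $p$, the classical SLLN under $p^*$ gives $n^{-1}\log D_n^{(3)} \to -KL(p^*, \hat{p}^*)$ almost surely. Applying Lemma \ref{kl_sup} in the trivial special case $w_i = w_{0,i}$, $\beta_i = \beta_{0,i}$ bounds $KL(p^*, \hat{p}^*) < k\epsilon'$, which gives the desired estimate for large $n$.

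The main obstacle is verifying the integrability needed for the SLLN step on $D_n^{(3)}$, namely $E_{p^*}[|\log(\hat{p}^*/p^*)|] < \infty$. The positive part is controlled via $\log x \leq 2\sqrt{x}$ together with Cauchy--Schwarz (giving $E_{p^*}[\sqrt{\hat{p}^*/p^*}] \leq 1$), while the negative part is finite because $KL(p^*,\hat{p}^*) < \infty$ by Lemma \ref{kl_sup}; the moment assumptions [A2] and [A5] enter here exactly as in the proof of Lemma \ref{kl_sup}. A secondary subtlety is the uniform boundedness of $\tilde{h}_i$ on $[0,T_n]$ used in the $D_n^{(1)}$ and $D_n^{(2)}$ calculations, which follows because $w_{0,i}$ is the kernel integral of a bounded, Lipschitz-continuous, integrable $\tilde{g}_i$ via [A1]. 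The overall argument hinges on the $\epsilon'/T_n$ scaling in the definition of $U^{\epsilon_n}$: it is precisely this normalization that converts a $T_n$-wide integral of hazard differences into an $O(\epsilon')$ bound on cumulative hazards.
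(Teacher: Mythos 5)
Your proof is correct, and for $D_n^{(1)}$ and $D_n^{(2)}$ it is essentially the paper's argument: a deterministic per-observation bound of order $\epsilon'/T_n$ on the log-hazard difference, integrated over $[0,T_n]$ to give an $O(\epsilon')$ bound on the cumulative hazards, with [A3] supplying the analogous rate for $D_n^{(2)}$; you correctly flag that the uniformity over $p\in U^{\epsilon_n}$ comes for free from this pointwise bound. Where you diverge is $D_n^{(3)}$. The paper does not pass through $KL(p^*,\hat p^*)$ as a single quantity; it splits $\log(\hat p^*/p^*)$ into the $\log h_{0,i}$ terms and the $H_{0,i}$ terms, bounds the former by $C^*\,n^{-1}\sum_j \mathbf{1}_{\{\mathbf{x}_j\in\mathscr{P}^*\Delta\hat{\mathscr{P}}^*\}}$ using boundedness of $\log h_{0,i}$, and handles the latter by applying the SLLN separately to $H^*(y,x)$ and $\hat H^*(y,x)$ and bounding the gap between their means by $2C_0'\,d(\mathscr{P}^*,\hat{\mathscr{P}}^*)$. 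Your route --- SLLN applied directly to $\log(\hat p^*/p^*)$, identification of the limit as $-KL(p^*,\hat p^*)$, and an appeal to the $\hat{\mathscr{P}}^*\Delta\mathscr{P}^*$ portion of Lemma \ref{kl_sup} --- reaches the same bound and is arguably cleaner, but it shifts the burden onto the integrability check $E_{p^*}\bigl[|\log(\hat p^*/p^*)|\bigr]<\infty$, which you correctly discharge via $(\log x)_+\le 2\sqrt{x}$, Cauchy--Schwarz, and finiteness of the KL divergence; the paper's term-by-term decomposition avoids this by only ever invoking the SLLN for quantities whose first moments are explicitly finite under [A2] and [A5]. Both arguments are valid here because, in the setting of Theorem \ref{thm1}, the approximating partition $\hat{\mathscr{P}}^*$ and hence $\hat p^*$ are fixed (not $n$-dependent), so the classical SLLN applies to a fixed integrand; the paper's more granular version has the side benefit of transferring with fewer changes to the $n$-dependent partitions used in Theorem \ref{thm2}.
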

\begin{proof}

This result follows from  the construction of ${U^{\epsilon_n}}$. Let $\hat{h}_{0,i}$ be the hazard function corresponding to $\beta_{0,i}$ and $\hat{w}_{0,i}^n$ for $\hat{p}$. Similarly we can define $\hat{H}_{0,i}$. Let $\hat{h}_{i}$, $\hat{H}_i$ be the functions for the $i$th partition $\hat{P}^*_i$ for a generic  ${p}$ in $U^{\epsilon_n}$. Without loss of generality, we assume $T_n\geq 1$ for large $n$. By construction, $h_i,\hat{h}_{0,i}$'s are bounded away from zero and infinity. Assume, $0< m'_0< \hat{h}_{0,i}\leq M'_0<\infty$.

\vspace{0.1in}

\noindent{\underline{\it Calculation for $|n^{-1}\log D_n^{(1)}|,|n^{-1}\log D_n^{(2)}|$}}

By the construction of  ${U^{\epsilon_n}}$, for $p(\cdot) \in {U^{\epsilon_n}}$, $|\log(\frac{{p}(y_j, \delta_j,{\bf x}_j)}{\hat{{p}}(y_j, \delta_j,{\bf x}_j)})|\leq C\epsilon'$ for some universal constant $C>0$. This step follows as, $|\log \frac{h_{0,i}}{h_i}|<\frac{2\epsilon'}{T_n}$; $|\hat{h}_{i}(t)-\hat{h}_{0,i}(t)|< M_0'|e^{ \frac{2\epsilon'}{T_n}} -1|\leq c_1 \frac{\epsilon'}{T_n}$; and $|\hat{H}_{i}(t)-\hat{H}_{0,i}(t)|\leq c_1\epsilon'$, $c_1>0$ a constant, $0\leq t\leq T_n$. 
Hence, $|n^{-1}\log D_n^{(1)}|\leq n^{-1}\sum_{j=1}^n\sum_{i=1}^{m^*}\large[|\log \frac{\hat{h}_{0,i}(y_j)}{\hat{h}_i(y_j)}|+|\hat{H}_{0,i}(y_j)-\hat{H}_i(y_j)|\large]{\bf I}_{{\bf x}_j\in \hat{P}^*_i} <C \epsilon'$, $C>0$ some constant. Note that the bound is uniformly over $p$ in $U^{\epsilon_n}$.

 Similarly  for $n^{-1}\log D_n^{(2)}$, by [A3], $|\hat{h}_{0,i}(t)-{h}_{0,i}(t)|\preceq \frac{\epsilon'}{T_n}$ and $|\hat{H}_{0,i}(t)-{H}_{0,i}(t)|\preceq\epsilon'$ for $0\leq t\leq T_n$. Here $\preceq$ implies `less than equal to' of  a constant multiple.  Hence,  $|\log(\frac{\hat{p}(y_j, \delta_j,{\bf x}_j)}{\hat{{p}}^*(y_j, \delta_j,{\bf x}_j)})|\leq C\epsilon'$. Here, without loss of generality we can use the same constant as in    the ratio in $n^{-1}\log D_n^{(1)}$. Hence,  $|n^{-1}logD_n^{(1)}|$, $|n^{-1}logD_n^{(2)}|<C \epsilon'$.

 \vspace{0.1in}

\noindent {\underline{\it Calculation for $n^{-1}|\log D_n^{(3)}|$}}


The absolute value of the term involving $h_{0,i}(\cdot)$'s in $n^{-1}\log D_n^{(3)}$ can be bounded in the following manner. 
Let \[S_{h}= n^{-1}|\sum_{j,i} [\log h_{0,i}(y_j){\bf I}_{\delta_j=1}{\bf I}_{{\bf x}_j\in {P}^*_i}-\sum_{j,i}\log h_{0,i}(y_j){\bf I}_{\delta_j=1}{\bf I}_{{\bf x}_j\in \hat{P}^*_i}]|.\] Then
 $S_h\leq C^* n^{-1}\sum_{j=1}^n {\bf 1}_{{\bf x}_j\in \mathscr{P}^*\Delta \hat{\mathscr{P}^*}}<2C^*\epsilon'$ for large $n$ with probability one, where $C^*=2\max_{i,t}|\log h_{0,i}(t)|$, and $E[S_h]\leq C^*\epsilon'$ .
  
 Next, we consider the terms involving $H_{0,i}(\cdot)$'s, that is  $n^{-1}\sum_{j=1}^n\sum_i\Large[ H_{0,i}(y_j){\bf I}_{{\bf x}_j\in P^*_i}-H_{0,i}(y_j){\bf I}_{{\bf x}_j\in \hat{P}^*_i}\Large]$. Note that  $E_{p_{0,i'}}[H_{0,i}(y)]<C_0'<\infty$ for some constant $C_0'$ for all $i,i'$, as shown in Lemma \ref{kl_sup}. 

  Let $n_i$ be the number of observations for which ${\bf x} \in P^*_i$, similarly we define $\hat{n}_i$'s for $\hat{P}^*_i$'s. Then $n_i/n\rightarrow \mu_x(P_i^*)$ and  $\hat{n}_i/n\rightarrow \mu_x(\hat{P}_i^*)$ almost surely.   For, $H^*(y,x)=H_{0,i}(y)$ if ${\bf x}\in P^*_i$ and $\hat{H}^*(y,x)=H_{0,i}(y)$ if ${\bf x}\in \hat{P}^*_i$, $E_{p^*}[\hat{H}^*(y,x)]<\infty$ and $ E_{p^*}[{H}^*(y,x)]<\infty$. 
    For any $\epsilon'>0$, ${n}^{-1}|\sum_{j=1}^n\large(\sum_iH_{0,i}(y_j){\bf I}_{{\bf x}_j\in P^*_i}-E_{p^*}[{H}^*(y,x)]\large)|<\epsilon'/4$ for large $n$ almost surely and ${n}^{-1}|\sum_{j=1}^n$ $\large(\sum_iH_{0,i}(y_j){\bf I}_{{\bf x}_j\in \hat{P}^*_i}-E_{p^*}[\hat{H}^*(y,x)]\large)|<\epsilon'/4$ for large $n$ almost surely. 
    
   Again, $H^*(y,x)$, $\hat{H}^*(y,x)$ are equal for ${\bf x}\in \hat{P}^*_i\cap P^*_i$ for $i=1,\dots,m^*$ and  $|E_{p^*}[{H}^*(y,x)-E_{p^*}[\hat{H}^*(y,x)]|\leq 2C_0'd(\mathscr{P}^*,\hat{\mathscr{P}}^*)<2C_0'\epsilon'$; $C_0'>0$. This proves our claim.
  


\end{proof}

\subsubsection*{Proof of Theorem \ref{thm2}}

Let $D_n$ be the maximum number of terminal nodes allowed for $n$ observations.  Note that  $D_n$ will induce a total  less than  $D_n{(mm_n)}^{D_n}$ many possible tree formations where $m_n$ is the number of grids for a covariate and we have $m$ covariates and they are continuous. Establishing $L^1$ or strong consistency will require controlling or bounding the size of this model space. Let $\{M_n\}_{n\geq 1}$ be a  positive sequence increasing to infinity.
For establishing strong consistency, we will need to assume the following conditions. 
For  positive sequences $\{a_n\}_{n\geq 1},\{b_n\}_{n\geq 1}$,  $a_n\prec b_n$, implies $a_n/b_n\rightarrow 0$, as $n\uparrow \infty$.

\begin{itemize}

\item[B1]  The covariates are assumed to be bounded. Each continuous covariate is partitioned into $m_{n}$ equi-spaced  grids for the tree splitting rule, such that $m_n\rightarrow \infty$ as $n\uparrow \infty$, and $\log m_n\sim O(\log n)$, and   $-\log \pi(T)= o(M_n^2)$ for any  tree $T$ such that $\#(T)<\infty$.
\item[B2]  $D^{1+\delta'}_nM_n^2  \log m_n=o(n)$, for some $\delta'>0$. 
\item[B3] $M_n^2\prec n$ and $n/M_n^2= o(n^{\delta})$ for any $\delta>0$.
 
 \item[B4]  $\log (h_{0,i}(t))=\int_0^\infty \tilde{g}_i(s)K(s/l_i^*,t/l_i^*)ds+\beta_{0,i}$, for some bounded, Lipschitz-continuous, functions $\tilde{g}_i$'s, that are supported on a compact set. Let $ w_{0.i}(t)=\int_0^\infty \tilde{g}_i(s)K(s/l_i^*,t/l_i^*)ds.$

 \item[B5] $\pi(l_{i'})$ is supported on equi-spaced $M_n^{d'}$, $d'>0$, many grids in  a closed interval  not containing zero, that is a subset of the positive real line,   and that contains  $l_i^*$ as an interior point, for any $i',i$, and  $-\log \pi(l)=o(M_n^2)$, for $l$ in the support. 
 
 Also,
 $\pi(\tau^2_i)$  and $\pi(\beta_i)$ have strictly positive  continuous densities on their supports, positive real line and real line, respectively, and $min\{-\log P(\tau_i^2>t),-\log P(|\beta_i|>t)\}\succeq t^{\alpha'}$, $\alpha'>0,t>0$.

\end{itemize}

Condition $B2$ bounds the $L^1$ covering number of the prior space. Conditions $B3,B4$ are needed for   controlling  the ratio of the joint density  for the parameter value around the truth and the true joint density, almost surely, for a small neighborhood around the truth and to guarantee exponential decaying prior probability outside a compact set. Condition  $B1$ is needed to approximate the true data generating tree, and  establish sufficient prior probability in that small neighborhood.

Let $H_{1,i}^n$ be the subset of RKHS with norm less than or equal to 1 for the  Gaussian process $w_i(\cdot)$ in $[0,T_n]$ for a fixed hyper parameter value $l_i$ in the support. Let $\epsilon_2 B^n_{1,i}$ be the Banach ball of radius $\epsilon_2$. Then, for any $\epsilon_2>0$,
\[-\log P(w_i\not \in M_nH_{1,i}^n+\epsilon_2  B_{1,i}^n ) \succeq M_n^2\] from Borel's inequality.

Let $K_{n,i}=\{ w_i:w_i\in M_nH_{1,i}^n+\epsilon_2 B_{1,i}^n\}$ and $K_n=\{w_i\in K_{n,i}\forall i \text{ and } \#(T)\leq D_n\}$. For $\epsilon_1>0,\alpha_1>0$,  for each partition $i$, we consider  the covering  of $K_{n,i}$ with sup-norm ball of radius $\alpha_1\epsilon^2_1/T_n$, and the covering of $\beta_i$'s by $\alpha_1\epsilon^2_1/T_n$ ball for $K_\beta=\{\beta_i:|\beta_i|\leq M_n^{2/\alpha'}; \#(T)\leq D_n\}$, where $\alpha'>0$ is from $B4$. For a particular tree $T$, we can define $K_n^T$ and $K_\beta^T$ similarly, and $K_{n,\beta}^T$ be their product space,  $K_{n,\beta}^T=\{(\beta_i,w_i):w_i\in M_nH_{1,i}^n+\epsilon_2 B_{1,i}^n, |\beta_i|\leq M_n^{2/\alpha'}\}$.  Let $K_{n,\beta}=\cup_{T:\# (T)\leq D_n}K_{n,\beta}^T$. Choosing $\alpha_1>0$ sufficiently small, for any $\epsilon_1>0$, this will induce a $\epsilon_1$ radius $L_1$ covering of the densities in $K_{n,\beta}$, from the relationship between Kullback-Leibler and total variation distance. Also, $-\log \pi(K_{n,\beta}^c)\succeq -D_n (\log D_n + \log m_n) +M_n^2 \succeq M_n^2$, as $-\log \pi(K_{n,i}^c)\succeq M_n^2.$ 

Now, the log-covering number of $K_{n,i}$ in sup norm, for fixed $l_i$, $ \log N(\alpha_1\epsilon^2_1/T_n,K_{n,i},\|\cdot\|_\infty)\preceq M_n^2+n^{\tilde{\delta}} $ for any $\tilde{\delta}>0$  from the the entropy calculation from Theorem 2.1 in  \cite{van2008rates} and the small ball probability calculation in Lemma \ref{gp_support}, and therefore, $\log N(\epsilon_1,K_{n,\beta},\|\cdot\|_1)\preceq D^{1+\delta'}_n (M_n^2+n^{\tilde{\delta}})\log m_n =o(n)$, by choosing $\tilde{\delta}>0$ sufficiently small. This log covering bound holds over the support of $\pi(l_i)$'s , by covering the compact support  of $\pi(l_i)$' as it is supported on  $M_n^{-d'}$ spaced grids and the log covering number of the support of hyper parameter over all possible partition  is $o(n)$.   Together, taking union over the choice of $l_i$'s over the terminal nodes,  we denote the sieves by $K_{n,\beta,l}$ where the log $L^1$-covering number of $K_{n,\beta,l}$  is $o(n)$, and the negative  log  prior probability outside $K_{n,\beta,l}$ is at least   $O(M_n^2)$.

As a result, we can have test statistics \citep{ghosh2003bayesian} $\phi_n$'s such that
\[ E_{p^*}[\phi_n]\leq e^{-cn}\text{ and } sup_{p\in U^c_\epsilon\cap K_{n,\beta,l}}E_p[1-\phi_n]\leq e^{-cn}; c>0.\]


Let  ${\tilde{U}^{\xi_n}}=\tilde{U}^{\xi_n}_{\hat{\mathscr{P}}^*}$ be the set of ${p}$ with  corresponding  hazard functions $\hat{h}_i$'s for their corresponding ${ {w}}_{i}(t)$'s, $\beta_i$'s,  and ${ {w}}_{i}(t)$'s  are in $ \frac{\xi_n}{T_n }$ supremum neighborhood around $\tilde{ {w}}_{0,i}^n(t)$, in $[0,T_n]$ and $\beta_{i}$ in $(\beta_{0,i}\pm \frac{{\xi_n}}{T_n})$ over $\hat{P}^*_i$ where  $\xi_n \sim M_n^2/n$ specified later, and $\hat{\mathscr{P}}$ corresponds to tree $\hat{T}^*$, with same number of terminal nodes as $T^*$, and $d(\mathscr{P}^*,\hat{\mathscr{P}}^*)\preceq n^{-\delta}$ for some $\delta>0$.    As $m_n$ goes to infinity in a polynomial order of $n$ (that is $m_n\sim n^\delta$, for some $\delta>0$), for true  tree $T^*$ and partition $\mathscr{P^*}$, we have an approximation with same number of nodes and partition $\hat{\mathscr{P}}^*_n$ such that $d(\hat{\mathscr{P}}^*_n,{\mathscr{P}}^*)\preceq n^{-\delta}$.  Note that   $\frac{p(y_j,\delta_j,{\bf x}_j)}{p^*(y_j,\delta_j,{\bf x}_j)}=\frac{p(y_j,\delta_j|{\bf x}_j)}{p^*(y_j,\delta_j|{\bf x}_j)}$, and  $w^n_{0,i}(t)=\int_0^{T_n}\tilde{g}_i(s)K(l^*_is,l^*_it)ds$.

Finally,
\begin{eqnarray}
E_{p^*}[\Pi_n(U_\epsilon^c|\cdot)]&=&E_{p^*}[\phi_n \Pi_n(U_\epsilon^c|\cdot)]+E_{p^*}[(1-\phi_n){\bf 1}_{K_{n,\beta,l}}\frac{\tilde{N}_n}{\tilde{D}_n}]+E_{p^*}[(1-\phi_n){\bf 1}_{K^c_{n,\beta,l}}\frac{\tilde{N}_n}{\tilde{D}_n}]\nonumber\\
&\leq &e^{-cn}+ E_{p^*}[(1-\phi_n){\bf 1}_{K_{n,\beta,l}}\frac{e^{ \epsilon'' n}\tilde{N}_n}{e^{ \epsilon'' n}\tilde{D}_n}]+E_{p^*}[(1-\phi_n){\bf 1}_{K^c_{n,\beta,l}}\frac{e^{ \epsilon'' M_n^2}\tilde{N}_n}{e^{ \epsilon'' M_n^2}\tilde{D}_n}]\nonumber\\
&\leq &e^{-cn}+e^{-cn}{e^{ \epsilon'' n}}+e^{-c^*M_n^2}{e^{ \epsilon'' M_n^2}}.
\label{thm2_eq}
\end{eqnarray}

Here $\tilde{D}_n=\int_{\tilde{U}^{\xi_n}}\prod_{j=1}^n\frac{p(y_j,\delta_j,{\bf x}_j)}{p^*(y_j,\delta_j,{\bf x}_j)}d\pi(\cdot)$ and $\tilde{N}_n=\int_{{{U}^{\epsilon}}^c}\prod_{j=1}^n\frac{p(y_j,\delta_j,{\bf x}_j)}{p^*(y_j,\delta_j,{\bf x}_j)}d\pi(\cdot)$, and $c^*>0$. The last step in the above inequality follows as a result of the following  Lemma \ref{lemthm2}, where we show that  $e^{\epsilon'' M_n^2}\tilde{D}_n$ is greater than one for large $n$ for  any $\epsilon''>0$ with probability one.
 Hence, choosing $\epsilon''>0$ sufficiently small, $E_{p^*}[\Pi_n(U_\epsilon^c|\cdot)]\leq 2e^{-0.5c^*M_n^2}$ for large $n$, and $\Pi_n(U_\epsilon^c|\cdot)<e^{-0.25c^*M_n^2}$ for all but finitely many $n$ almost surely under $p^*$, by Borel-Cantelli Lemma, using similar argument as in Theorem \ref{thm1}. Hence,  $\Pi_n(U_\epsilon^c|\cdot)$ converges to zero almost surely.

\begin{lemma}
Under the set up Theorem \ref{thm2}, $e^{\epsilon'' M_n^2}\tilde{D}_n$ is greater than one for large $n$ with probability one, where $\epsilon''>0$.

\label{lemthm2}
\end{lemma}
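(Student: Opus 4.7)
\bigskip

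\noindent\textbf{Proof plan.} The plan is to lower-bound $\tilde{D}_n$ by restricting the integral to the shrinking Kullback--Leibler-type neighborhood $\tilde{U}^{\xi_n}$ and separating the problem into (a) a uniform lower bound on the integrated log-likelihood ratio over $\tilde{U}^{\xi_n}$ and (b) a prior-mass lower bound on $\pi(\tilde{U}^{\xi_n})$. Specifically, I would write
\[
\tilde{D}_n \ \geq\ \pi(\tilde{U}^{\xi_n}) \cdot \inf_{p\in \tilde{U}^{\xi_n}} \prod_{j=1}^n \frac{p(y_j,\delta_j,{\bf x}_j)}{p^*(y_j,\delta_j,{\bf x}_j)},
\]
and show that both the logarithm of each factor is $o(M_n^2)$ almost surely. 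Since $\epsilon''>0$ is arbitrary, it suffices to track constants and take $\xi_n = \eta M_n^2/n$ with $\eta>0$ chosen sufficiently small as a function of $\epsilon''$.

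\medskip

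\noindent\textbf{Likelihood-ratio control.} First, I would reuse the three-factor decomposition of Lemma \ref{dc_prop}, writing the ratio as $D_n^{(1)} D_n^{(2)} D_n^{(3)}$ through the intermediate densities $\hat{p}$ and $\hat{p}^*$ associated with the partition $\hat{\mathscr{P}}^*$. The bounds on $n^{-1}|\log D_n^{(i)}|$ obtained in the proof of Lemma \ref{dc_prop} are linear in the neighborhood radius, so substituting $\xi_n$ in place of $\epsilon'$ yields $n^{-1}|\log D_n^{(i)}|\preceq \xi_n$ uniformly over $p\in\tilde{U}^{\xi_n}$ almost surely for large $n$. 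The only modification is that the ${\bf x}$-piece (controlled by $d(\mathscr{P}^*,\hat{\mathscr{P}}^*_n)$) must be $O(\xi_n)$ as well, which is why the proof uses an approximation $\hat{\mathscr{P}}^*_n$ with $d(\mathscr{P}^*,\hat{\mathscr{P}}^*_n)\preceq n^{-\delta}$; by condition B1 and polynomially refining grids $m_n$, such an approximation exists for $n$ large. Combining gives $\sum_j \log(p^*/p)\preceq n\xi_n = \eta M_n^2$, uniformly in $p\in\tilde{U}^{\xi_n}$.

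\medskip

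\noindent\textbf{Prior-mass lower bound.} Second, I would bound $-\log\pi(\tilde{U}^{\xi_n})$ by summing contributions from the tree, the intercepts $\beta_i$, and the Gaussian processes $w_i(\cdot)$. By condition B1 and a refinement of Proposition \ref{tree_support1}, the tree contribution is $-\log\pi(\hat{\mathscr{P}}^*_n)=o(M_n^2)$. By the continuity of $\pi(\beta_i)$ in B5, $-\log\pi(\beta_i\in(\beta_{0,i}\pm\xi_n/T_n))=O(\log(T_n/\xi_n))=O(\log n)=o(M_n^2)$ (since $M_n^2$ grows faster than any $\log n$ by B3). The nontrivial piece is the Gaussian process small-ball probability $\pi(\|w_i-\tilde{w}_{0,i}^n\|_\infty<\xi_n/T_n)$, which we control by repeating the argument of Lemma \ref{gp_support} with the sharper radius: the RKHS-norm term is still $O(T_n^2)=n^{o(1)}=o(M_n^2)$ by B3, while the small-ball $-\log P(\|w^n\|_\infty<\xi_n/(2T_n))$ is bounded via the Aurzada--Lifshits entropy estimate applied to the pseudometric $\rho(s,t)^2\sim|s-t|$, yielding covering number of order $T_n(T_n/\xi_n)^2\preceq T_n^3 n^2/M_n^4$, which by B3 is $o(M_n^2)$. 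Discretizing $l_i$ over the $M_n^{d'}$ grid of B5 and paying a $-\log\pi(l_i)=o(M_n^2)$ cost, together with the corresponding cost for $\tau_i^2$ via continuity, gives $-\log\pi(\tilde{U}^{\xi_n})=o(M_n^2)$.

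\medskip

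\noindent\textbf{Assembly and the main obstacle.} Combining the two pieces yields $-\log\tilde{D}_n\leq (C\eta + o(1))M_n^2$ almost surely, for a universal constant $C$. Taking $\eta<\epsilon''/(2C)$ gives $-\log\tilde{D}_n<\epsilon'' M_n^2$ for large $n$ almost surely, proving $e^{\epsilon'' M_n^2}\tilde{D}_n>1$. The main obstacle is the sharpening of the Gaussian-process small-ball estimate in Lemma \ref{gp_support}: there the bound was merely $o(n)$ with a fixed radius $\xi$, whereas here the radius $\xi_n/T_n$ shrinks at a polynomial rate, so all three ingredients (RKHS approximation error, RKHS norm of $\tilde{w}_{0,i}^n$, and small-ball entropy) must be balanced at the $M_n^2$ scale. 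The key quantitative inputs that make this work are $T_n=n^{o(1)}$ from Proposition \ref{max_sup} and the near-$n$ growth of $M_n^2$ in B3, which together ensure $T_n^3n^2/M_n^4=o(M_n^2)$.
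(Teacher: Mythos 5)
Your overall architecture is the same as the paper's: lower-bound $\tilde{D}_n$ by $\pi(\tilde{U}^{\xi_n})$ times a uniform bound on the log-likelihood ratio over $\tilde{U}^{\xi_n}$, reuse the three-factor decomposition $D_n^{(1)}D_n^{(2)}D_n^{(3)}$, take $\xi_n\sim M_n^2/n$, and redo the small-ball computation of Lemma \ref{gp_support} at the sharper radius $\xi_n/T_n$ (that is exactly the paper's Lemma \ref{lem2thm2}, and your accounting of the RKHS-norm, entropy, and hyperparameter-grid costs there is essentially right). The gaps are in the likelihood-ratio half, where "substitute $\xi_n$ for $\epsilon'$ in Lemma \ref{dc_prop}" does not work for two of the three factors. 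For $D_n^{(2)}$, the bound in Lemma \ref{dc_prop} is \emph{not} linear in the neighborhood radius: it comes from [A3], which only gives $|\tilde{w}^n_{0,i}(t)-w_{0,i}(t)|=o(1/T_n)$, a truncation error determined by $T_n$ that does not shrink with $\xi_n$. Here you need that error to be $O(\xi_n/T_n)=O(M_n^2/(nT_n))$, which [A3] cannot supply. The paper instead invokes [B4]: since each $\tilde{g}_i$ is compactly supported, $w^n_{0,i}=w_{0,i}$ exactly for large $n$, so $D_n^{(2)}=1$. You have [B4] in force but never use it; as written, your bound on $D_n^{(2)}$ fails.

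For $D_n^{(3)}$, the corresponding bound in Lemma \ref{dc_prop} was obtained from the strong law of large numbers at a \emph{fixed} level $\epsilon'$; here the expected symmetric-difference mass is $\preceq n^{-\delta}\rightarrow 0$, so a fixed-level SLLN is not enough and you need a concentration argument. The paper's Proposition \ref{propthm2} applies Hoeffding's inequality plus Borel--Cantelli to get $\sum_{j}{\bf 1}_{{\bf x}_j\in\mathscr{P}^*\Delta\hat{\mathscr{P}}^*}\preceq n^{1-\delta}+n^{1/2+\delta''}$ almost surely, and one must also carry the factor $T_n$ (since $H_{0,i}(y_j)\preceq y_j\leq T_n$, controlled by Proposition \ref{max_sup}) before concluding $|\log D_n^{(3)}|\preceq n^{1-\gamma^*}=o(M_n^2)$. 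Your sentence asserting that the ${\bf x}$-piece is $O(\xi_n)$ "almost surely for large $n$" states the needed conclusion but supplies no mechanism for it. Both gaps are repairable with ingredients already present in the paper ([B4] and a Hoeffding/Borel--Cantelli step), but as proposed the likelihood-ratio half of the argument is incomplete.
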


\begin{proof}
For Theorem \ref{thm2}, we write \begin{eqnarray}
\prod_{j=1}^n\frac{p(y_j,\delta_j,{\bf x}_j)}{p^*(y_j,\delta_j,{\bf x}_j)}=\large(\prod_{j=1}^n\frac{p(y_j,\delta_j,{\bf x}_j)}{\hat{p}(y_j,\delta_j,{\bf x}_j)}\large)\large(\prod_{j=1}^n\frac{\hat{p}(y_j,\delta_j,{\bf x}_j)}{\hat{p}^*(y_j,\delta_j,{\bf x}_j)}\large)\large(\prod_{j=1}^n\frac{\hat{p}^*(y_j,\delta_j,{\bf x}_j)}{{p}^*(y_j,\delta_j,{\bf x}_j)}\large)=D^{(1)}_n\times D^{(2)}_n\times D^{(3)}_n.
\nonumber
\end{eqnarray}

As in proof of Theorem \ref{thm1}, we will consider $n^{-1}\log D_n^{(1)}, n^{-1}\log D_n^{(2)}, n^{-1}\log D_n^{(3)}$. 

As, $\tilde{g}(\cdot)$ is supported on a compact set, for large $n$, $w^n_{0,i}(t)=\int_0^{T_n}\tilde{g}_i(s)K(l^*_is,l^*_it)ds=\int_0^{\infty}\tilde{g}_i(s)K(l^*_is,l^*_it)ds=w_{0,i}(t)$. Hence, $D_n^{(2)}=1$. 

We choose, $\xi_n=\alpha\frac{M_n^2}{n}$. Choosing, $\alpha$ sufficiently small, we have $|n^{-1}\log D_n^{(1)}|\leq 0.5 \epsilon'' M_n^2/n$, from the argument given in Lemma \ref{dc_prop} (argument for bound  on $n^{-1}\log D_n^{(1)}$ for Theorem \ref{thm1}). Again $-\log \pi(\tilde{U}^{\xi_n})=o(M_n^2)$ (see the following  Lemma \ref{lem2thm2}). 

Using boundedness of $h_{0,i}(\cdot)$'s,  $n^{-1}|\log D_n^{(3)}|\preceq n^{-1}T_n\sum_{j=1}^n {\bf 1}_{{\bf x}_j\in \mathscr{P}^*\Delta \hat{\mathscr{P}^*}}$. As mentioned earlier,  $m_n$ goes to infinity in a polynomial order of $n$ (that is $m_n\sim n^\delta$, for some $\delta>0$),  and therefore for true  tree $T^*$ and partition $\mathscr{P^*}$, we have an approximation with same number of nodes and partition $\hat{\mathscr{P}}^*_n$ such that $d(\hat{\mathscr{P}}^*_n,{\mathscr{P}}^*)\preceq n^{-\delta}$ for some $\delta>0$. 

Now, $n^{-1}E[\sum_{j=1}^n {\bf 1}_{{\bf x}_j\in \mathscr{P}^*\Delta \hat{\mathscr{P}^*}}]\preceq n^{-\delta}$, and hence, from the following Proposition \ref{propthm2},  $|\sum_{j=1}^n {\bf 1}_{{\bf x}_j\in \mathscr{P}^*\Delta \hat{\mathscr{P}^*}}|\preceq n^{1-\delta}+n^{1/2+\delta''}$ for any $\delta''>0$ for large $n$ with probability one. 

Hence, from Proposition \ref{max_sup}, $T_n|\sum_{j=1}^n {\bf 1}_{{\bf x}_j\in \mathscr{P}^*\Delta \hat{\mathscr{P}^*}}|\preceq n^{1-\gamma^*}=o(M_n^2)$, where $0<\gamma^*<min\{\delta, 1/2-\delta''\}$ is a constant, where $0<\delta''<1/2$.

Combining all three the result follows.

\end{proof}
\begin{lemma}
Under the set up of Theorem \ref{thm2}  $-\log \pi(\tilde{U}^{\xi_n})=o(M_n^2)$, where $\tilde{U}^{\xi_n}$ is defined in Theorem \ref{thm2}
 proof before equation \eqref{thm2_eq}. 
 \label{lem2thm2}
\end{lemma}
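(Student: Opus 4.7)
The goal is to show $\pi(\tilde{U}^{\xi_n}) \geq e^{-o(M_n^2)}$ with $\xi_n \sim M_n^2/n$. My plan is to bound this below by factorizing the prior over (i) the tree $\hat{T}^*$, (ii) the intercepts $\beta_i$, (iii) the hyperparameters $(\tau_i^2, l_i)$, and (iv) the Gaussian process trajectories $w_i$ for $i = 1, \dots, m^*$, and verify that each factor is at least $\exp(-o(M_n^2))$.

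Factor (i) is immediate from [B1], since $\hat{T}^*$ has only $m^*$ terminal nodes. Factor (ii) follows because $\pi(\beta_i)$ has a positive continuous density at $\beta_{0,i}$, giving $-\log \pi(\beta_i \in (\beta_{0,i} \pm \xi_n/T_n)) = O(\log n) = o(M_n^2)$. For (iii), [B5] places $\pi(l_i)$ on a grid of $M_n^{d'}$ points with $-\log \pi(l) = o(M_n^2)$ uniformly, and for large $n$ this grid contains a point $\hat{l}_i$ arbitrarily close (in any negative polynomial sense) to $l_i^*$; $\pi(\tau_i^2)$ has a positive continuous density at $1$, so any sub-polynomial neighborhood of $1$ carries prior mass at least $\exp(-o(M_n^2))$.

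The crux is (iv). By [B4] each $\tilde{g}_i$ has compact support $[0, S]$, so for $n$ large $\tilde{w}_{0,i}^n = w_{0,i}$ on $[0, T_n]$. Following the construction in the proof of Lemma \ref{gp_support}, build a Riemann-sum approximant $\hat{w}_{0,i}(t) = (S/N) \sum_{j=1}^N \tilde{g}_i(s_j) \tau_i^2 K(s_j/\hat{l}_i, t/\hat{l}_i)$ with $N = O(T_n/\xi_n)$, chosen finely enough (together with $\hat{l}_i, \tau_i^2$ as above) that $\|\hat{w}_{0,i} - w_{0,i}\|_\infty < \xi_n/(2 T_n)$. Crucially, because the grid lies in the fixed compact set $[0,S]$, the RKHS norm is uniformly bounded, $\|\hat{w}_{0,i}\|_{H_{\hat{l}_i,n}}^2 \leq M_1$ for a constant $M_1$ independent of $n$, rather than the $T_n^2 M_1$ bound needed for weak consistency. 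The standard Gaussian concentration inequality then yields
\[
-\log P(\|w_i - w_{0,i}\|_\infty < \xi_n/T_n) \leq \tfrac{1}{2} M_1 - \log P(\|w_i^{(0)}\|_\infty < \xi_n/(2T_n)),
\]
with $w_i^{(0)}$ the centered process under $(\tau_i^2, \hat{l}_i)$. Applying the Aurzada--Lifshits entropy bound exactly as in Lemma \ref{gp_support}, the covering number of $[0, T_n]$ at radius $\xi_n/(2 T_n)$ in the intrinsic pseudo-metric is of order $T_n^3/\xi_n^2$, so $-\log P(\|w_i^{(0)}\|_\infty < \xi_n/(2 T_n)) \preceq T_n^3/\xi_n^2 \sim T_n^3 n^2 / M_n^4$.

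To close: [B3] gives $M_n^2 = n^{1-o(1)}$, hence $M_n^6 = n^{3-o(1)}$, while Proposition \ref{max_sup} yields $T_n^3 = n^{o(1)}$ almost surely, so $T_n^3 n^2 = n^{2+o(1)} = o(M_n^6)$, i.e., $T_n^3/\xi_n^2 = o(M_n^2)$. Combining all pieces across the $m^*$ terminal nodes (a constant) yields $-\log \pi(\tilde{U}^{\xi_n}) = o(M_n^2)$, as claimed. The main obstacle is step (iv) at the much tighter radius $\xi_n/T_n$ compared to Lemma \ref{gp_support}; the compactness of $\mathrm{supp}\,\tilde{g}_i$ imposed by [B4] (rather than merely the integrability in [A1]) is precisely what keeps the RKHS-norm leading term $\|\hat{w}_{0,i}\|_H^2$ of order $1$ instead of $T_n^2$, so that the $o(M_n^2)$ budget comfortably absorbs both that term and the small-ball factor.
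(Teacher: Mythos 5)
Your proof is correct and follows essentially the same route as the paper's: a Riemann-sum RKHS approximant of $w_{0,i}$ at the finer tolerance $\xi_n/T_n$, the hyperparameter grid from [B5], the standard concentration/small-ball inequality, and the Aurzada--Lifshits entropy bound, closed by checking $T_n^3/\xi_n^2 = o(M_n^2)$ via [B3] and Proposition~\ref{max_sup} (the paper phrases this as $\|\hat{w}_{0,i}^n\|^2_{H_{l_i,n}}$ and $T_n/\xi_n$ both being $o(n^\delta)$ for every $\delta>0$). One minor overstatement: the compact support in [B4] is not what rescues the RKHS-norm term here, since even the cruder $T_n^2M_1$ bound from Lemma~\ref{gp_support} is already $n^{o(1)}=o(M_n^2)$ under [B3]; the compactness is needed elsewhere (to make $D_n^{(2)}=1$ in Lemma~\ref{lemthm2}).
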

\begin{proof}
Repeating the construction from Lemma \ref{max_sup}, using $\beta\xi_n$ in place of $c_\xi$, and choosing $\beta>0$ small enough,  $|\hat{w}_{0,i}^n(t)-w_{0,i}(t)|<\frac{\xi_n}{T_n}$. Note that for any $d>0$ there exists $l$ in the support such that $|l-l_i^*|\preceq T_n^{-d}$, from B3 and Proposition \ref{max_sup}. 

Now \[-\log\pi(S^i_{\xi_n,l_i,\tau^2_i})\leq \|\hat{w}_{0,i}^n\|^2_{H_{l_i,n}}-\log(P(\|w^n\|_\infty<\frac{\xi_n}{2T_n}),\]
where both of this term is $o(M_n^2)$,  using the argument  in Lemma \ref{gp_support} for small ball probability, and  as $\xi_n\sim \frac{M_n^2}{n}$, $\|\hat{w}_{0,i}^n\|^2_{H_{l_i,n}}=o(n^\delta)$ and  $\frac{T_n}{\xi_n}=o(n^\delta)$ for any $\delta>0$,  from B3 and Proposition \ref{max_sup}. Hence, this completes the proof.
\end{proof}

\begin{proposition}
Under the set up of Theorem \ref{thm2} and Lemma \ref{lemthm2}, and for  true  tree $T^*$ and partition $\mathscr{P^*}$, and  an approximation with same number of nodes and partition $\hat{\mathscr{P}}^*_n$ such that $d(\hat{\mathscr{P}}^*_n,{\mathscr{P}}^*)\preceq n^{-\delta}$,  we have
$|\sum_{j=1}^n {\bf 1}_{{\bf x}_j\in \mathscr{P}^*\Delta \hat{\mathscr{P}^*}}|\preceq n^{1-\delta}+n^{1/2+\delta''}$ for large $n$ with probability one, for any $0<\delta''<1/2$.
\label{propthm2}
\end{proposition}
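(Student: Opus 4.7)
The plan is to treat this as a standard deviation-from-mean bound for an i.i.d. sum of bounded random variables. Let $Z_j = \mathbf{1}_{\mathbf{x}_j \in \mathscr{P}^* \Delta \hat{\mathscr{P}}^*}$, $j = 1, \dots, n$. Since the grids are equi-spaced (condition [B1]) and the approximating partition $\hat{\mathscr{P}}^*_n$ is determined by those deterministic grids plus the fixed true partition $\mathscr{P}^*$, the $Z_j$ are i.i.d.\ Bernoulli random variables with common mean $\mu_x(\mathscr{P}^* \Delta \hat{\mathscr{P}}^*_n) \preceq n^{-\delta}$, so that $E[\sum_j Z_j] \preceq n^{1-\delta}$. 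This explains the first term of the bound, and the second term should arise from the typical $\sqrt{n}$ fluctuation scale.

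First I would split $\sum_j Z_j = E[\sum_j Z_j] + (\sum_j Z_j - E[\sum_j Z_j])$, so that the claim reduces to showing that the centered sum is of size at most $n^{1/2 + \delta''}$ for all but finitely many $n$, almost surely. Second, I would apply Hoeffding's inequality to the i.i.d.\ $[0,1]$-valued $Z_j$'s to obtain
\[
P\Bigl( \Bigl|\sum_{j=1}^n Z_j - E\bigl[\textstyle\sum_{j=1}^n Z_j\bigr] \Bigr| > t_n \Bigr) \leq 2\exp(-2 t_n^2 / n),
\]
and then set $t_n = n^{1/2 + \delta''}$, so the right-hand side becomes $2\exp(-2 n^{2\delta''})$. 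Third, since $2\delta'' > 0$ the sequence $\exp(-2 n^{2\delta''})$ is summable in $n$, so the Borel--Cantelli lemma yields that, almost surely, $|\sum_j Z_j - E[\sum_j Z_j]| \leq n^{1/2 + \delta''}$ for all sufficiently large $n$. Combining with the bound on the mean gives the claim.

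There is no real obstacle here; this is a routine concentration-plus-Borel--Cantelli argument. The only subtlety worth flagging is the independence/fixedness of $\hat{\mathscr{P}}^*_n$: one must verify that $\hat{\mathscr{P}}^*_n$ is a deterministic function of $n$ (determined by the equi-spaced grids of size $m_n$ and the fixed true partition), not a random object depending on $\mathbf{x}_1, \dots, \mathbf{x}_n$, so that the $Z_j$ are genuinely i.i.d.\ Bernoulli. Once that is clear, Hoeffding applies directly. If one wanted a slightly sharper bound, Bernstein's inequality would exploit the fact that $\mathrm{Var}(Z_j) \preceq n^{-\delta}$ is small, but for the stated conclusion Hoeffding is already enough.
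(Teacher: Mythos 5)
Your proof is correct and follows essentially the same route as the paper's: both apply Hoeffding's inequality to the i.i.d.\ indicator variables $\mathbf{1}_{\mathbf{x}_j\in \mathscr{P}^*\Delta\hat{\mathscr{P}}^*}$ with a deviation threshold of order $n^{1/2+\delta''}$ beyond the mean (which is $\preceq n^{1-\delta}$), and then invoke the Borel--Cantelli lemma using summability of $e^{-cn^{2\delta''}}$. Your explicit centering at the mean and the remark about $\hat{\mathscr{P}}^*_n$ being a deterministic (grid-determined) partition are just slightly more careful presentations of the same argument.
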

\begin{proof}
Suppose, $n^{-1}E[\sum_{j=1}^n {\bf 1}_{{\bf x}_j\in \mathscr{P}^*\Delta \hat{\mathscr{P}^*}}]\leq c_1n^{-\delta}$, $c_1>0$. Then,   using Hoeffding's inequality, \[P\Large(n^{-1}\sum_{j=1}^n {\bf 1}_{{\bf x}_j\in \mathscr{P}^*\Delta \hat{\mathscr{P}^*}}>2c_1n^{-\delta}+n^{-\frac{1}{2}+\delta''}\Large)\preceq e^{-c_2(n^{1-2\delta}+n^{2\delta''})},\]
where $c_2>0$ is a universal constant.  Hence, \[P(n^{-1}\sum_{j=1}^n {\bf 1}_{{\bf x}_j\in \mathscr{P}^*\Delta \hat{\mathscr{P}^*}}>2c_1n^{-\delta}+n^{-\frac{1}{2}+\delta''} \text{infinitely  often })\preceq lim_{m\uparrow \infty}\sum_{n=m}^\infty e^{-c_2(n^{1-2\delta}+n^{2\delta''})}=0,\]
which proves our claim.

\end{proof}

\subsubsection{Proof of Remark \ref{rmrk1}}

Let $\xi>0$  be any constant. Let, $\hat{g}_{0,i}(t)=\log h_{0,i}(t)-c_{0,i}$ and let ${g}^c_{0,i}(t)$ be its continuous approximation by a piece-wise linear function on $[0,T_n]$ such that $|\hat{g}_{0,i}(t)-{g}^c_{0,i}(t)|=0$ outside a set of Lebesgue measure at most $c_0\frac{c_\xi'}{T_n}$, for small constant $c_\xi'>0$ (to be chosen later) and at the points of differentiability, the absolute value of the first derivative of ${g}^c_{0,i}(t)$  is bounded by $(\frac{c_\xi'}{T_n})^{-1}$. Also, $|{g}^c_{0,i}(t_1)-{g}^c_{0,i}(t_2)|\leq(\frac{c_\xi'}{T_n})^{-1} |t_1-t_2|$.  Here $c_0$ is a constant which is the product of the number of discontinuities and the maximum jump of the step function $\hat{g}_{0,i}(t)$. Construction of this piece-wise linear function is given in the following subsection. Note that $T_n\geq T_{0,i}$ for large $n$ almost surely. 
Without loss of generality, we assume $T_n\geq 1$.

Next, we approximate ${g}^c_{0,i}(t)$ by an element of RKHS. Let, $\phi_l(s)=\frac{1}{2l}K_l(s)$, where $K_l(s-t)=K(s/l,t/l)=e^{-|s-t|/l}$. Let  $l=l_n=\frac{c_\xi}{T^3_n}$.  Let $A_t^{\delta_n}=(t\pm\frac{{c_\xi'}^2}{T_n^2})$. Then, $\int_{|s|\leq \frac{{c_\xi'}^2}{T_n^2}}\phi_l(s)ds=1-o(1/T_n)$  for $l\sim T_n^{-3}$ as $\int_{-\infty}^\infty\phi_1(s)ds =1$. Now, for $\frac{{c_\xi'}^2}{T_n^2}< t<T_n-\frac{{c_\xi'}^2}{T_n^2}$,
\begin{eqnarray*}
| \int_0^{T_n} {g}^c_{0,i}(s)\phi_{l_n}(t-s)ds-{g}^c_{0,i}(t)|&\leq \int_{s\in A_t^{\delta_n}}|g^c_{0,i}(s)-g^c_{0,i}(t)|\phi_{l_n}(t-s)ds+|g^c_{0,i}(t)|\int_{s\in A_t^{\delta_n}}\phi_{l_n}(t-s)ds-1|\\
&+\int_{s\in (A_t^{\delta_n})^c}|g^c_{0,i}(s)|\phi_{l_n}(s)ds\\
&\leq c_0 \frac{T_n}{c_\xi'}\frac{{c_\xi'}^2}{T_n^2}+\frac{\xi}{4T_n}\leq \frac{\xi}{T_n}
\end{eqnarray*}
by first choosing small $c_\xi'$, and then choosing a small $c_\xi$ depending on $c_\xi'$. 

 Let $\hat{{g}}^c_{0,i}(t)$   be a Riemann sum approximation for $\int_0^{T_n} {g}^c_{0,i}(s)\phi_{l_n}(t-s)ds$ with $c_\xi^3T_n^{-d}$ grid size for grids $s_1=0,s_2,\ldots$, and  for $l \in [l_n,l_n+c^3_\xi T_n^{-d}),\tau_i^2\in[1,1+c^3_\xi T_n^{-d}))$. That is, $\hat{{g}}^c_{0,i}(t)=\sum_{j=1}^{\lceil \frac{T_n^{d+1}}{c^3_\xi}\rceil}c_\xi^3T_n^{-d}g^c_{0,i}(s_j)\tau_i^2\phi_l(s_j-t)$. We choose $c_\xi$ small enough such that $|\hat{{g}}^c_{0,i}(t)-\int_0^{T_n} {g}^c_{0,i}(s)\phi_{l_n}(t-s)ds|<\frac{\xi}{T_n}$.   The last step follows, as from some calculation, for a universal constant $k_0>0$, 
\begin{eqnarray*}
 |\hat{{g}}^c_{0,i}(t)-\int_0^{T_n} {g}^c_{0,i}(s)\phi_{l_n}(t-s)ds|\leq \sum_{j=1}^{\lceil \frac{T_n^{d+1}}{c^3_\xi} \rceil}\Large[\int_{s_j}^{s_{j+1}}|g^c_{0,i}(s)-g^c_{0,i}(s_j)|\phi_{l_n}(t-s)ds+&\\\int_{s_j}^{s_{j+1}}|g^c_{0,i}(s_j)|\{|\phi_{l_n}(t-s_j)-\phi_{l_n}(t-s)|+|\phi_{l_n}(t-s_j)-\phi_{l}(t-s_j)|\}ds+& \frac{c_\xi^3}{T_n^{d}}(|\tau_i^2-1|)|g^c_{0,i}(s_j)|\phi_l(s_j-t)]\\
\leq k_0\frac{T_n^3}{c_\xi}\sum_{j=1}^{\lceil\frac{T_n^{d+1}}{c_\xi^3}\rceil}\frac{c_\xi^3}{T_n^{d}}\large[ \frac{c^3_\xi}{c_\xi'T_n^{d-1}}+3\frac{c^2_\xi}{ T_n^{d-3}}\large]\leq 2k_0\frac{c_\xi}{T_n} 
\end{eqnarray*}
for a conservative choice  $d\geq 9$, and a small $c_\xi<c_\xi'$, given the choice of  the small constant $c_\xi'$.  It is easy to check the RKHS norm of $\hat{{g}}^c_{0,i}$ is of the polynomial order of $T_n$ and hence $o(n)$ from Proposition \ref{max_sup}.

Based on the prior  on $l_i$'s as given in Remark \ref{rmrk1}, we have $-\log \pi(l_i\in( [l_n,l_n+c^3_\xi T_n^{-d}))\sim \log n=o(n)$, $-\log \pi(\tau_i^2\in( [1,1+c^3_\xi T_n^{-d}))\sim \log(n)$. From the proof of Lemma \ref{gp_support}, we will have $\rho^2(s,t)\sim T_n^3|s-t|$, for $|s-t|\preceq T_n^{-3}$ and hence, the $\xi/T_n$ covering number for the index set $[0,T_n]$ would be polynomial order of $T_n$ and therefore, from \citet{aurzada2008small}, and Proposition \ref{max_sup},     $-\log P(  sup|\hat{g}^c_{0,i}(\cdot)-w_i(\cdot)|<\xi/2T_n)=o(n)$, for the Gaussian process prior on $w_i(\cdot)$ for $\hat{P}^*_i$.

Next, we select the set  $U^{\epsilon_n}$ in the denominator of equation \eqref{postcal1} in the Proof of Theorem \ref{thm1}. It is induced by the partition  $ \hat{\mathscr{P}}^*$,  such that  $\mu_x(\hat{\mathscr{P}}^*\Delta  {\mathscr{P}}^*)< \xi$, for small $\xi>0$, and  the Gaussian process $w_i$ in $\hat{P}^*_i$ such that  $w_i(\cdot)\in (\hat{{g}}^c_{0,i}(\cdot)\pm \frac{\xi}{T_n})$  in $[0,T_n]$,   and  $\beta_i\in (c_{0,i}\pm \frac{\tilde{c}_\xi}{T_n})$  for the small constant $\tilde{c}_\xi$. Without loss of generality we can assume it to be the constant $c_\xi$ needed for the approximation earlier. For the set $U^{\epsilon_n}$, the negative log prior probability is of $o(n)$, as seen in Theorem \ref{thm1}.

Now from \ref{postcal1}, 
\begin{eqnarray}
\prod_{j=1}^n\frac{p(y_j,\delta_j,{\bf x}_j)}{p^*(y_j,\delta_j,{\bf x}_j)}=\large(\prod_{j=1}^n\frac{p(y_j,\delta_j,{\bf x}_j)}{\hat{p}(y_j,\delta_j,{\bf x}_j)}\large)\large(\prod_{j=1}^n\frac{\hat{p}(y_j,\delta_j,{\bf x}_j)}{\hat{p}^*(y_j,\delta_j,{\bf x}_j)}\large)\large(\prod_{j=1}^n\frac{\hat{p}^*(y_j,\delta_j,{\bf x}_j)}{{p}^*(y_j,\delta_j,{\bf x}_j)}\large)=D^{(1)}_n\times D^{(2)}_n\times D^{(3)}_n
\label{decomp2}
\end{eqnarray}
where, $p^*$, $\hat{p}^*$ is defined the same as in equation \ref{decomp}, $p$ is a density in $U^{\epsilon_n}$ and $\hat{p}$ is the density using $\beta_i=c_{0,i}$ and  $\hat{g}^c_{0,i}(t)$ $[0,T_n]$ instead of $\hat{g}_{0,i}(t)$'s in $\hat{P}^*_i$'s. 

The calculation for $D_n^{(1)}, D_n^{(3)}$ is therefore the same as the Theorem \ref{thm1} proof. For calculation $D_n^{(2)}$, note that if $\hat{h}^c_{0,i}(t), H^c_{0,i}$ corresponds to  $\hat{g}^c_{0,i}(t)$ then, $|\hat{H}^c_{0,i}-H_{0,i}|\preceq \xi$ in $[0,T_n]$ from the construction of ${g}^c_{0,i}(t)$. Also,  $\hat{h}^c_{0,i}(t)=\hat{h}^c_{0,i}(T_n)$ for $t>T_n$.

Hence, 
\begin{eqnarray}
n^{-1}|\log D_n^{(2)}| \leq n^{-1}\sum_{j=1}^n\sum_i|\hat{H}^c_{0,i}(y_j)-H_{0,i}(y_j)||{\bf I}_{{\bf x_j} \in \hat{P}^*_i}+n^{-1}\sum_{j=1}^n \sum_i|\log  \frac{\hat{h}^c_{0,i}(y_j)}{h_{0,i}(y_j)}|{\bf I}_{{\bf x_j} \in \hat{P}^*_i}\preceq \xi
\label{dn2bd}
\end{eqnarray}
for large $n$, almost surely.  The above  result  holds by the construction of  $\hat{g}^c_{0,i}(t)$ which for small $c_\xi,c_\xi'$, will remain within a $\frac{2\xi}{T_n}$ supremum neighborhood  of $ \hat{g}_{0,i}(t)$ outside a set of Lebesgue measure less than $\frac{\xi}{T_n}$ in $[0,T_n]$ and $\log  \hat{h}^c_{0,i}, \log {h}_{0,i}$ are bounded, and from the fact that the proportion of $y_j$'s that lie within 
that set where $\hat{g}^c_{0,i}(t), \hat{g}_{0,i}(t)$ are  more than $\frac{4\xi}{T_n}$ apart will be at most of the order of $\xi$ almost surely, for large $n$.

The last step follows as,   if  the measure of that set is less than $\alpha\xi$ under $p^*$ for all ${\bf x}$ (where $\alpha>0$),   then for ${\bf x}_j$'s in  $P^*_i$, the probability that the proportion of $y_j$'s in that smaller than $\alpha\xi$ measure set  is greater than $2\alpha\xi$,  would be bounded above by $e^{-2n_i\alpha^2\xi^2}$  by the Hoeffding inequality, where there are $n_i$ many observations in ${P}^*_i$. By the law of large numbers $n_i>n(1-\alpha')\mu_x({P}^*_i)$ for large $n$, almost surely, for any $0<\alpha'<1$.  Then, from the Borel-Cantelli lemma, we have the result on the almost sure  bound on the proportion for large $n$. Hence, equation \eqref{dn2bd} is satisfied and the proof of Theorem \ref{thm1} follows by selecting $\xi$ small.

\subsubsubsection{Construction for Remark \ref{rmrk1}}

\begin{remark}
Theorem \ref{thm1} holds for piecewise constant hazard functions which eventually converges to a constant that is  $\log h_{0,i}(t)=c_{0,i}$ for $t\geq T_{0,i}\geq 0$ for $i=1,\ldots,m^*$, if we assume $c_0l_i^{\alpha_{0,i}}\leq\pi(l_i)$ for some $\alpha_{0,i}>0,c_0>0$ for $|l_i|\leq c_l, c_l>0$. 
\label{rmrk1}

\end{remark}

\begin{remark}
Theorem \ref{thm1} holds for bounded Lipschitz continuous $\log h_{0,i}(t)$'s  if we assume $c_0l_i^{\alpha_{0,i}}\leq\pi(l_i)$ for some $\alpha_{0,i}>0,c_0>0$ for $|l_i|\leq c_l, c_l>0$.
\label{rmrk2}
\end{remark}
We can also slightly relax the tree support condition as follows. 
\begin{remark}

If the true partition is not induced by a tree, but given any $\epsilon>0$, can be approximated by a partition ${\mathscr{P}}_1^*$, induced by a  tree ${T}_1^*$ (may depend upon $\epsilon$), such that  $d(\mathscr{P}^*,{\mathscr{P}}_1^*)<\epsilon$, then Theorem \ref{thm1} holds.
\label{rmrk3}

\end{remark}

A simple construction for the continuous function for Remark \ref{rmrk1} can be written as follows. Let $\hat{g}_{0,i}(t)=c_{1,i}$ for $t<t^*_{1,i}$, and  $\hat{g}_{0,i}(t)=c_{2,i}$ for $t^*_{1,i}\leq t <t^*_{2,i}$, and so on, and $\hat{g}_{0,i}(t)=c_{d_i,i}=0$ for $t\geq t^*_{d_i,i}=T_{0,i}$. 

Let $t^*_{0,i}=0$. We define $g^c_{0,i}(t)=c_{k,i}$ for $t\in[t^*_{k,i},t^*_{k+1,i}-\frac{c_\xi}{T_n}|c_{k,i}-c_{k+1,i}|]$, for $k=0,\dots,d_i-1$, and it is linear between, $t=t^*_{k+1,i}-\frac{c_\xi}{T_n}|c_{k,i}-c_{k+1,i}|$ and $t^*_{k+1,i}$ for $k=1,\dots, d_i-1$. Also, note that  $t^*_{k,i}<t^*_{k+1,i}-\frac{c_\xi}{T_n}|c_{k,i}-c_{k+1,i}|$  for $c_\xi$ small
and the intervals are well defined.

\subsubsection{Proof of Remark \ref{rmrk2}}
We use $\log h_{0,i}(t)$ in place of the continuous approximation ${g}^c_{0,i}(t)$  in Remark 1, between $0$ and $T_n$. 

Next of the calculation follows similarly for prior probability, with  $U^{\epsilon_n}$ in the denominator in the proof of Theorem \ref{thm1} is induced by the approximating partition $ \hat{\mathscr{P}}^*$,  such that  $d(\hat{\mathscr{P}}^*,{\mathscr{P}}^*)< \xi$, for small $\xi>0$, and   $w_i(\cdot)\in (\hat{{g}}^c_{0,i}\pm \frac{\xi}{T_n})$ in $\hat{P}^*_i$   and  $\beta_i\in (0\pm \frac{c_\xi}{T_n})$  for a small constant $c_\xi$. For this set,  the negative log-prior probability is of $o(n)$ similar to Remark \ref{rmrk1}. 

Next, for computing the denominator in equation \ref{postcal1}, we use  \begin{eqnarray*}
\prod_{j=1}^n\frac{p(y_j,\delta_j,{\bf x}_j)}{p^*(y_j,\delta_j,{\bf x}_j)}=\large(\prod_{j=1}^n\frac{p(y_j,\delta_j,{\bf x}_j)}{\hat{p}(y_j,\delta_j,{\bf x}_j)}\large)\large(\prod_{j=1}^n\frac{\hat{p}(y_j,\delta_j,{\bf x}_j)}{\hat{p}^*(y_j,\delta_j,{\bf x}_j)}\large)\large(\prod_{j=1}^n\frac{\hat{p}^*(y_j,\delta_j,{\bf x}_j)}{{p}^*(y_j,\delta_j,{\bf x}_j)}\large)=D^{(1)}_n\times D^{(2)}_n\times D^{(3)}_n
\label{decomp3}
\end{eqnarray*}
where $p^*$, $\hat{p}^*$ is defined the same as in equation \eqref{decomp2}. Here, $\hat{p}$ is defined with hazard function $\hat{h}^c_{0,i}$. Here, $\log \hat{h}^c_{0,i}(t)=\hat{{g}}^c_{0,i}$ is the Riemann approximation with grid size $c_\xi^3 T_n^{-d}$ for the convolution representation for $\log \hat{h}_{0,i}(t)$  on $[0,T_n]$  with kernel $\phi_l(\cdot)$ with $l \in [l_n,l_n+c^3_\xi T_n^{-d})$, as in Remark \ref{rmrk1}, where   $\hat{h}_{0,i}(t)={h}_{0,i}(t)$ for $0\leq t\leq T_n$ and $\hat{h}_{0,i}(t)=h_{0,i}(T_n)$ for $t>T_n$.  Therefore, the argument for $n^{-1}\log D_n^{(2)}$ from the proof of Remark \ref{rmrk1} can be applied, as  by construction,    for sufficiently small $c_\xi,c_\xi'$, $\log \hat{h}^c_{0,i}(t)$ will remain within $\frac{\xi}{T_n}$ supremum neighborhood  of $\log \hat{h}_{0,i}(t)$ outside a set of Lebesgue measure less than $\frac{\xi}{T_n}$ in $[0,T_n]$.

\subsubsection{Proof of Remark \ref{rmrk3}}
We write equation \eqref{decomp} as\begin{eqnarray*}
\prod_{j=1}^n\frac{p(y_j,\delta_j,{\bf x}_j)}{p^*(y_j,\delta_j,{\bf x}_j)}=\large(\prod_{j=1}^n\frac{p(y_j,\delta_j,{\bf x}_j)}{\hat{p}(y_j,\delta_j,{\bf x}_j)}\large)\large(\prod_{j=1}^n\frac{\hat{p}(y_j,\delta_j,{\bf x}_j)}{\hat{p}^*(y_j,\delta_j,{\bf x}_j)}\large)\large(\prod_{j=1}^n\frac{\hat{p}^*(y_j,\delta_j,{\bf x}_j)}{\hat{p}_1^*(y_j,\delta_j,{\bf x}_j)}\large)\large(\prod_{j=1}^n\frac{\hat{p}_1^*(y_j,\delta_j,{\bf x}_j)}{{p}^*(y_j,\delta_j,{\bf x}_j)}\large)
\end{eqnarray*}
where, $p^*(\cdot)$ is same as in equation \eqref{decomp}. Let $T^*_1$ be a tree for partition $\mathscr{P}^*_1$, within $\epsilon'$  distance of the true partition $\mathscr{P}^*$, and the tree $\hat{T}^*$ (and corresponding $\hat{\mathscr{P}}^*$) be its approximation such that $d(\hat{\mathscr{P}}^*,\mathscr{P}^*_1)<\epsilon'$, and $-\log\pi(\hat{T}^*)=o(n)$. We use true hazard functions for the true partitions on the approximating partitions for  $\hat{\mathscr{P}}^*$ and $\mathscr{P}^*_1$, for $\hat{p}^*(\cdot)$ and $\hat{p}^*_1(\cdot)$, respectively. Then,  $\hat{p}(\cdot)$, $p(\cdot)$ can be defined as in equation \eqref{decomp}. 

The calculations for $\prod_{j=1}^n\frac{\hat{p}^*(y_j,\delta_j,{\bf x}_j)}{\hat{p}_1^*(y_j,\delta_j,{\bf x}_j)}$ and  $\prod_{j=1}^n\frac{\hat{p}_1^*(y_j,\delta_j,{\bf x}_j)}{{p}^*(y_j,\delta_j,{\bf x}_j)}$ can be addressed in the same way as  for $D_n^{(3)}$ in Theorem \ref{thm1} and in Lemma \ref{dc_prop}. Hence, the proof follows.

\end{document}